\newcommand{\keywords}[1]{\par\addvspace\baselineskip
\noindent\keywordname\enspace\ignorespaces#1}
\def\BState{\State\hskip-\ALG@thistlm}
\begin{document}

\mainmatter  

\title{Beyond-Regular Typestate}

\titlerunning{}

%
%
 \author{Ashish Mishra%
 \ \ \ \ Y. N. Srikant}

\authorrunning{}
%
 \institute{Indian Institute of Science,\\
 Bangalore, India.\\
 \url{{ashishmishra,srikant}@csa.iisc.ernet.in}}

%
%

\toctitle{Lecture Notes in Computer Science}
\tocauthor{Authors' Instructions}
\maketitle

\begin{abstract}
We present an extension for regular typestates, called Beyond-Regular Typestate(BR-Typestate), which is expressive enough to model non-regular properties of programs and protocols over data. We model the BR-Typestate system over a dependently typed, state based, imperative core language, and we prove its soundness and tractability. We have implemented a prototype typechecker for the language, and we show how several important, real world non-regular properties of programs and protocols can be verified.

\keywords{Typestate, Dependent Type, Non-Regular Program Properties, Verification}
\end{abstract}

\section{Introduction}
\label{introduction}

To quote Strom and Yemini, the originator of the \textit{Typestate}~\cite{typestate}- ``while \textit{type} of data defines what operations are allowed on data for the life time of the data, \textit{typestate} defines which operations are valid in a given context or state of the data''. Typestates have been a useful concept to model and reason about the \textit{stateful} \textit{effect} systems~\cite{Fink,Ashish} from varied domains. Consider the Buffer \textit{State} (analogous to a \textit{class} in Object Oriented paradigm) in Figure~\ref{listing:simpletype}, with the allowed operations \textit{add, remove} and \textit{print}. Types can enforce what operations are allowed on data. However, since the types associated with a datum is immutable, it can not model program properties such as, \textit{add or remove from the buffer, only if the buffer is in open state.} Typestates associate such mutable types to data objects. Typestate example in Figure~\ref{listing:typestate} defines two sub-typestates of the earlier Buffer state, \textit{OpenBuffer} and \textit{ClosedBuffer}. The open(close) operation transits the ClosedBuffer(OpenBuffer) to open(close) state. Figure~\ref{simplets} shows a regular typestate property automaton for Figure~\ref{listing:typestate}. Normally, these typestate properties are modeled and enforced using types~\cite{typestate,typestates-for-objects}, or could be a feature of the language and enforced statically or at runtime~\cite{TSOP}.

%
%

\begin{figure*}
\begin{minipage}{.3\textwidth}
\begin{lstlisting}[language=Java, basicstyle=\selectfont\ttfamily\scriptsize, numbers=none, extendedchars=true, numberstyle=\tiny] 
state Buffer {
    
    var [item];
    void add();
    item remove();
    void print(); 
}
\end{lstlisting}
\caption{A Buffer State}
\label{listing:simpletype}
\end{minipage}
\hspace{.5cm}
\begin{minipage}{.3\textwidth}
\begin{lstlisting}[language=Java, basicstyle=\selectfont\ttfamily\scriptsize, numbers=none, extendedchars=true, numberstyle=\tiny] 
state Buffer {
    
    var [item];
    void add();
    item remove();
    void print(); 
}
state OpenBuffer{
  var [item]
  void add();
  item remove();
  void close();
  void print();
}
state ClosedBuffer{
  var [item];
  void open();
  void print();
}
\end{lstlisting}
\captionsetup{width=.5\textwidth}
\caption{Open and Close Buffer States}
\label{listing:typestate}
\end{minipage}
\begin{minipage}{.3\textwidth}
\begin{tikzpicture}[->,>=stealth',shorten >=1pt,auto,node distance=2cm,
                    semithick]
  \tikzstyle{every state}=[fill=white,draw=black,text=black]

  \node[state] (A)                    {$open$};
  \node[initial,state] 		     (B)  [below of=A]      {$closed$};

  \path (A) edge [loop above] node {$add, remove, print$} (A)
            edge [bend left]	       node {$close$} (B)
    
	(B) edge [loop below] node {$print$} (B)
            edge [bend left] node {$open$} (A);

\end{tikzpicture}
\caption{FSM for the Simple Typestate Property}
\label{simplets}
\end{minipage}
\end{figure*}

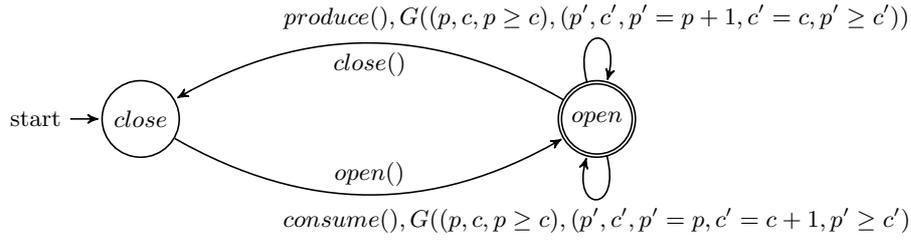
\begin{figure*}[htbp]
\centering
\begin{tikzpicture}[->,>=stealth',shorten >=1pt,auto,node distance=6cm,
                    semithick]
  \tikzstyle{every state}=[fill=white,draw=black,text=black]

  \node[initial,state] (A)                    {$close$};
  \node[state,double] (B) [right of=A]        {$open$};

  \path  (A) edge [bend right] node {$open()$} (B)
  
	  (B) edge [loop above] node {$produce(), G((p, c , p \geq c), (p' , c', p' = p + 1, c' = c, p' \geq c'))$} (B)
	      edge [loop below] node {$consume(), G((p, c , p \geq c) , (p' , c', p' = p,  c' = c + 1, p' \geq c')$} (B)
	      edge [bend right] node {$close()$}									(A);

\end{tikzpicture}
\caption{Counter Machine defining the Invariant property for Producer Consumer over Buffer State Object}
\label{fig:ProdCons-graphical}
\end{figure*}

\begin{figure*}[htbp]
\centering
\begin{tikzpicture}[->,>=stealth',shorten >=1pt,auto,node distance=2cm,
                    semithick, state/.style=state without output]
  \tikzstyle{every state}=[fill=white,draw=black,text=black]

\node[state, initial] (A0) at (-2, 0) [label={$(0,0)$}] {$close$};
\node[state] (A) [label={$(0,0)$}] {$open$};
\node[state] (C) at (2,0) [label={$(2,0)$}]{$open$};
\node[state] (D) at (4,2) [label={$(2,1)$}]{$open$};
\node[state] (E) at (4,0) [label={$(3,0)$}]{$open$};
\node[state] (F) at (6, 0) [label={$(3,0)$}]{$open$};
\node[state] (G) at (6, 2) [label={$(2,2)$}]{$open$};
\node[state, draw=red] (H) at (8,2) [label={$(2,3)$}] {$open$};

  \path (A0) edge [bend left] node {$open()$} (A)
	(A) edge [bend left] node {$p();p()$} (C)
	(C) edge 		node {$c()$} (D)
	(D) edge 		node {$c()$} (G)
	(G) edge 		node {$c()$} (H)
	(C) edge [bend left]		node [below] {$p()$} (E)
	(C) edge [bend right]		node [below] {$p()$} (F);

\end{tikzpicture}
\caption{A trace for BR-Typestate change for the Prod-Consumer example, invariant $\forall node, p \geq c$. $p() = produce(), c() = consume()$} 
\label{fig:ts-prodcons-trace}
\end{figure*}
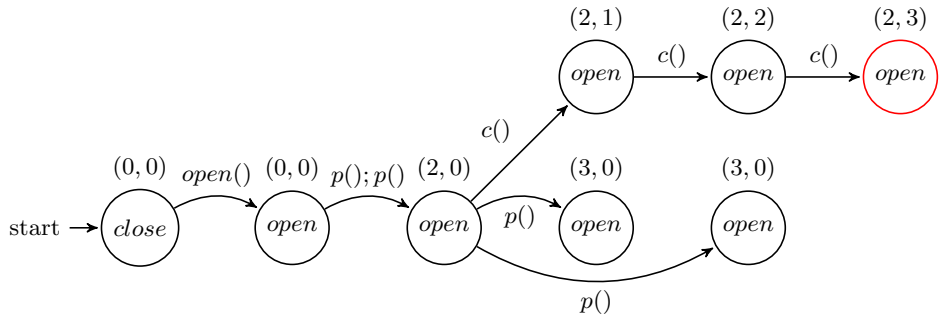

Now, let us consider a slightly richer example of a Buffer object shared between a producer and a consumer process. The buffer provides library methods \textit{produce} and \textit{consume} to these processes. An important runtime property which a producer-consumer model like this must adhere to is- ``At any time during the execution the number of items put into the Buffer must be greater than or equal to the number of items consumed from the Buffer''. At the same time, the items can be produced or consumed only when the Buffer is in \textit{Open} state. 

Figure~\ref{fig:ProdCons-graphical} shows a multiple counter machine~\cite{kozenBook} modeling such a producer consumer problem over a buffer. The machine's states model the states of the Buffer. The number of items produced and consumed are captured using two counters. A transition in the machine is of the form ($\alpha, G(\phi_i, \phi_2)$), where $\alpha$ is an action (like produce or consume) and $G(\phi_1, \phi_2)$, is the guard condition for the transition, requiring $\phi_1$ and guaranteeing $\phi_2$. The property stated above could be defined as an invariant on such a machine ($p \geq c$ in this case). 

The language needed to express and enforce this program property is context-free and thus the regular \textit{Typestate} lacks expressiveness to model such a property~\cite{typestates-for-objects}.

\begin{figure*}[htbp]
\begin{lstlisting}[xleftmargin=1pt, language=Java, basicstyle=\selectfont\ttfamily\scriptsize, numbers=left, extendedchars=true, numberstyle=\tiny, xrightmargin=1pt, mathescape=true] 
state ProducerConsumer {
    type SB :  $\Pi$ ($\phi$(p, c), Buffer); 
    var (1, SB($\phi$(0, 0), OB)) buffer = new OB();   
    void open((1, SB($\phi$(p, q, p >= q), CB)) >> (1, SB($\phi$(p, q, p >= q), OB)) buf)[]{ buf.open();}
    void produce((1, SB($\phi$(p, q, p >= q), OB)) >> (1, SB($\phi$(p + 1, q, p + 1 >= q), OB)))[]{ buf <- (1, SB($\phi$(p + 1, q, p + 1 >= q), OB));}
    void consume((1, SB($\phi$(p, q, p >= q), OB)) >> (1, SB($\phi$(p, q + 1, p >= q + 1), OB)))[]{ buf <- (1, SB($\phi$(p, q + 1, p >= q + 1), OB));}
    void close((1, SB($\phi$(p, q, p >= q), OB)) >> (1, SB($\phi$(p, q, p >= q), CB)))[]{ buf.close();}
}
\end{lstlisting}

\begin{lstlisting}[xleftmargin=1pt, language=Java, basicstyle=\selectfont\ttfamily\scriptsize, numbers=left, extendedchars=true, numberstyle=\tiny, firstnumber=9, xrightmargin=1pt, mathescape=true] 
state Main{	
    void  main()[]{
    var (1, _) pc = new ProducerConsumer();
    var (1, SB($\phi$(0, 0), CB)) buffer = new CB();   
    pc.open(buffer); pc.produce(buffer); pc.produce(buffer);
    match(buffer){
      case OpenBuffer { pc.consume(buffer); pc.consume(buffer); pc.consume(buffer);}
      case ClosedBuffer { pc.produce(buffer);}
      default { pc.produce(buffer); } };}
\end{lstlisting}

\caption{Example Producer-Consumer}
\label{fig:ProdCons-prog}
\end{figure*}

Figure~\ref{fig:ProdCons-prog}, contains the source for a simple Producer Consumer model over a Buffer as described, in our dependently typed language (described later). The State has a Buffer field and a set of methods \textit{open, produce, consume, close}. Each field is annotated with its type which could be a user defined \textit{dependent type}~\cite{dependenttype}, dependent on the runtime values of some dependent term.

Typestates are modeled as instances (line 3) of user defined dependent type families (line 2). Each method has a Hoare style pre and post constraints, which are modeled as a special change type ``\textit{$\gg$}'' that restricts the operations allowed on an object thereby simulating the guarded transitions of the counter machine for the property described earlier. For example, the annotations on method produce in state ProducerConsumer, restricts the production of items to the input Buffer object \textit{buf} only if it is in \textit{open}(OB) state and the number of items produced are greater than or equal to the number of items consumed from buf. 

A typestate in our model is a predicate over object \textit{States} (a regular Typestate)  and an extra set of Presburger formulas. Given these dependently typed annotations with dependent terms coming from a restricted domain, we can mechanically verify that every well typed method and (in turn the whole program) satisfies the annotated pre-condition and guarantees the annotated post-condition. With such an extension, we can model and enforce the guards of multiple counter machines and can enforce these beyond regular program properties {\it with static type checking}, and we call our extension as {\it Beyond-Regular} Typestate (BR-Typesatate). 
There are various languages (both research and real world) which have the full capacity of these dependent types which allow the types to capture and typecheck very complex problems statically. The issue with these languages is that Typechecking for dependent types is undecidable in general (constraint satisfaction is as hard as program equivalence checking)~\cite{Cayenne}, (e.g. Coq, Martin-L\"{o}f type theory(underlying NuPrl) etc.). 

Figure ~\ref{fig:ts-prodcons-trace} shows a property violating trace for the main code fragment. We associate a pair (p, c) representing the number of items produced and consumed respectively till now (shown above the state). Thus, the property checking reduces to the reachability problem for a node with ($p_i, c_i$) as its constraint, such that $p_i < c_i$. The figure shows one such violating trace for the above code with violating node colored red. The violation is caused due to the possible execution of the \textit{OpenBuffer} case (line 15) of the \textit{match} expression.

\subsection{Our Contribution}

\begin{itemize}
\item We present the concept of Beyond-Regular Typestate that has higher expressiveness compared to the regular typestate and can model and verify non-regular program properties.
 
\item We implement this concept as a restricted dependent type system over an imperative dependently-typed core language inspired by ``Typestate-Oriented Programing'', and give the complete formalism for system.

\item We present a formal proof of the correctness and  the decidability of typechecking for our BR-Typestate system. We have also implemented a prototype typechecker for our typestate system.

\item We model several non-regular real world typestate program properties in our language and verify them using the BR-typestate system.

\end{itemize}

The outline of the paper is as follows In section~\ref{dependenttypestate}  we present the formal
language and the BR-Typestate system. In section~\ref{analysis},  we discuss all the important results
and formal properties of our language and the BR-Typestate. Section~\ref{results}, presents some of
the important non-regular program properties and the empirical results that we have generated. Related work and conclusions form the content of sections~\ref{relatedworks} and~\ref{conclusion} respectively.

\section{Beyond-Regular Typestate}
\label{dependenttypestate}
Beyond-Regular(BR) Typestate extends the regular typestate to depend on auxiliary terms. Theoretically, the base terms on which the typestate could depend could be any expression in the language, but this will cause the reasoning over such a system undecidable. Thus, in our work we restrict these base terms to belong to a smaller and less expressive yet decidable domain of Presburger Arithmetic formulas. The expressions in the language might mutate the type-state of the terms. We also restrict these possible mutations so as to make the dependent base terms domain closed under these mutating operations. The utility and the power of these extensions and restrictions will be discussed in detail in section~\ref{analysis}.

\begin{figure*}[t]
\centering
\begin{tikzpicture}[->,>=stealth',shorten >=1pt,auto,node distance=3cm,
                    semithick, state/.style=state without output]
  \tikzstyle{every state}=[fill=white,draw=black,text=black]

\node[state, initial] (A) [label={$(0,0)$}] {$close$};
\node[state] (B) at (3,0) [label={$(0,0)$}] {$open$};
\node[state] (C) at (6,0) [label={$(2,0)$}]{$open$};
\node[state] (D) at (8,2) [label={$(2,1)$}]{$open$};
\node[state] (E) at (10,0) [label={$(p,c)$}]{$open$};

  \path (A) edge [bend left] node {$open()$} (B)
	(B) edge [bend left] node {$(produce())^2$} (C)
	(C) edge 		node {$consume()$} (D)
	(C) edge [bend right, dashed]		node [below] {$(produce() \mid consume())^{*}$} (E)
	(D) edge [dashed]		node {$(produce() \mid consume())*$} (E);

\draw [dotted] (8.5,2.5) -- (10,4);
\draw [dotted] (10,0) -- (12,0);

\end{tikzpicture}
\caption{A possibly infinite Trace for BR-Typestate change for the Producer-Consumer example, invariant being, $\forall \ nodes, p \geq c$}.
\label{fig:tsexample-infinite}
\end{figure*}
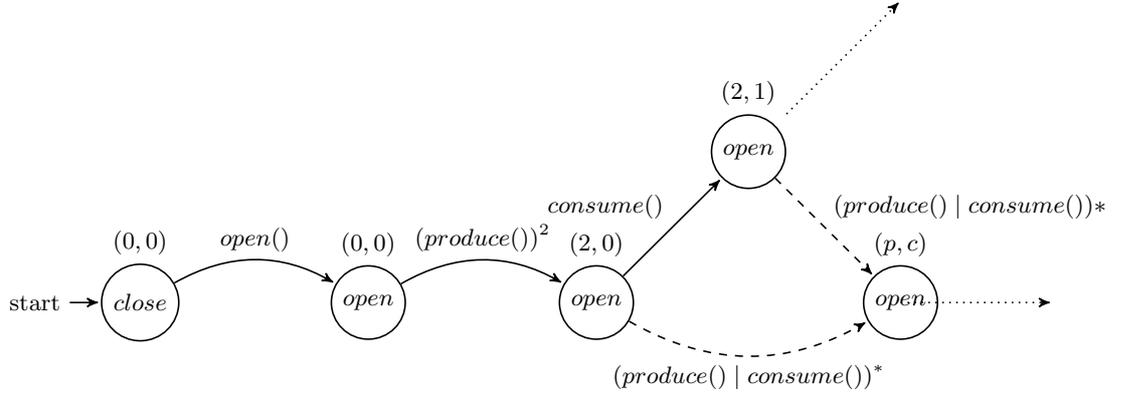

With the intuitive informal understanding of the concept of BR-Typestate, now we present a more formal definition for it-
\begin{definition}[Beyond-Regular Typestate]
A BR-Typestate \textit{BR-ts} for an object \textit{a}, is represented as $a@ts$ and is defined as an instance of a dependent function type family $\Pi_{(\phi : \Phi, s : S)}.\tau$, where $\Phi$ is the type of dependent base terms domain, (Presburger Formulas) in our concrete typestate system and $S$, is the type of the finite state set available in regular typestate. A typestate will be some member of this type family for a given dependent base term and a given state. 

\end{definition}

Thus each node along with the universal invariant $\forall nodes, p \geq c $ in Figure~\ref{fig:tsexample-infinite}, represents a BR-ts for a Buffer object. Thus a given node with a state \textit{open} and (p, c) pair as (c1, c2) represents a BR-ts $[\{(p == c1, c == c2) \wedge c1 \geq c2\}/ \phi, open/S]\tau$.


\subsection{Core Language}
\label{corelanguage}

\subsubsection{Syntax}
We present a small, core language, inspired by and built upon the ideas from~\cite{TSOP,Plaid,typestates-for-objects}. The language is a state oriented, statically typed imperative programming language with restricted dependent types. The language also has States in place of Classes, along with fields, methods,
and variables. We have highlighted the new features of the language as compared to the earlier typestate oriented programming languages and typestate works in Table~\ref{table:syntax}. The language allows definitions of user defined dependent function type family(\textbf{typefam}), and instantiations of these functions with particular dependent terms(\textbf{type}). These type families and type instantiations let the programmer define types dependent on terms coming from the domain ($\Phi \times states$) and thus allows modeling of BR-Typestates. Moreover, it gives the type system its power to express any possible trace generated by a multiple counter machine (discussed in section~\ref{analysis}). The syntax allows to annotate each method declaration with the Pre and Post BR-Typestate values for parameters and the environment(\textbf{method} in the Table~\ref{table:syntax}). The Pre and Post typestates are represented as typestate transition type \textbf{($\tau_i \gg \tau_j$)}. 
The language requires invariants to be provided explicitly with a {\it while} statement. This assumption is crucial for guaranteeing the termination of the BR-Typestate type-checking since the traces generated by the dependent typesystem are possibly infinite length (the type system can simulate a multiple counter machine).
In section~\ref{analysis} we discuss the automatic inference of such invariants for some particular subclass of program properties.


Instantiation of \textit{States} using a novel \textit{new} expression, parameterized by a presburger formula(\textbf{new S($\phi : \Phi$)}) is possible. This creates a new object value with the associated BR-typestate parameterized with $(\phi, S)$. Sequential composition is standard as in any imperative language. The static types in the language are either primary types, a state S, a function type (\textbf{$\tau_1 \rightarrow \tau_2$}) or a (permission, type) pair \textbf(a, $\tau$). Besides this, there are special types defining a BR-Typestate instance and its transition. A BR-typestate of a variable, reference or a value is an instance ($\phi, s$).$\tau$ of a dependent function type family $\Pi(\phi : \Phi, s : S).\tau$. The BR-Typestate transition is defined by a typestate transition type($\tau_i \gg \tau_j$) or a method type($\tau_i \rightarrow \tau_2[\tau_i \gg \tau_j]$), which includes a function type and a collection of typestate transition types over parameters and environment variables. Finally, the dependent terms(\textbf{$\Phi$}) of dependent types are either a normal presburger formula or a closed bounded presburger formula. A presburger formula has a standard definition of linear logical constraints over arithmetic addition and constant multiplication terms.

Managing aliases is as imperative in BR-Typestate as is in regular typestate~\cite{TSOP,typestates-for-objects}. To correctly capture the typestate changes in an imperative language, the changes across any possible aliases must be captured. We use the permission system similar to the earlier works on regular typestates which are effective in our current type system as well. There are three permissions, unique (a unique reference to the object) represented by ``1'', shared (atleast two distinct references) represented by  ``2'' and immutable represented by ``-1''. Typing rules for permissions are skipped in view of limited space.


\begin{table*}[htbp]
\centering 
\begin{tabular}{c c c l} 
(program) & (P)  & ::= & $state_1, state_2, ... state_n $ in main \\ 
(state definition) & (state) & ::= & state S case of S \{ $\overline{d} \}$\\
(declaration) & (d) & ::=  & $method \mid field \mid state \mid typefam \mid type$  \\
(method-decl) & (method) & ::= & \colorbox{gray!50}{$\tau_r$ $m_i$ ($\overline{\tau_{ai} \gg \tau_{ai'} a_i}$)[$\overline{\tau_j \gg \tau_{j'} a _j}$] \{ { field; method; stmt; e } \}} \\
(field-decl) & (field) & ::= & (var $\mid$ val) $\tau$ f \\ 
(type-decl) & (type) & ::= & \colorbox{gray!50}{$\gamma$ ($\phi_i, s_i$)} \\
(typeFamily-decl) & (typefam) & ::= & \colorbox{gray!50}{type $\gamma \Pi_{(\phi: \Phi, s : state)}.\tau$} \\
(statement) & (stmt) & ::= & let x = e in stmt \\
		& & &	    $\mid$ let \^{x}.f = e in stmt \\
		& & &	    $\mid$ e $\leftarrow$ e in stmt \\
		& & &	    $\mid$ match (e : S) $\overline{\textnormal{case e \{e\}}}$ \\
 		& & &	    $\mid$\colorbox{gray!50}{while [$\exists.\phi$] ($e_1 : Bool$, $e_2$)} \\
		& & &	    $\mid$ case e \{ e \} \\
(expression) & (e) & ::= & x $\mid$ \^x $\mid$ new S() \colorbox{gray!50}{$\mid$ new S ($\phi : \Phi$)}\\
		& & &	    $\mid$ e.m($e_1, e_2,..., e_p$) \\
		& & &	    $\mid$ \colorbox{gray!50}{e ; e }\\
		& & &	    $\mid$ c \\
(const)     & (c) & ::= & boolliteral $\mid$  intliteral $\mid$ stringliteral \\
(permission) & (a) & ::= & unique (1) $\mid$ shared (2) $\mid$ immutable (-1) \\
(type context) & ($\Gamma$) & ::=  & $\bullet$ $\mid$ $\delta$, $\Gamma$	\\
 & ($\delta$) & ::=  & x : $\tau$ $\mid$  e : $\tau$ $\mid$ d : $\tau$ $\mid$ P : $\tau$ $\mid$ \colorbox{gray!50}{$\tau$ : $\star$}  \\
(heap) & ($\Theta$) & ::=  & $\bullet$ $\mid$ $\theta$, $\Theta$	\\
 & ($\theta$) & ::=  & x, \^x $\mapsto$ value  \\
(value) & value & ::= & c $\mid$ d $\mid$ new S() $\mid$ new S ($\phi: Phi$) $\mid$ $l_i$ \\
(type) & ($\tau$)& ::= &  void $\mid$ int $\mid$ bool $\mid$ string \\
&&&				$\mid$ S \\
(typestate transition)&&&				$\mid$ \colorbox{gray!50}{$\tau_i \gg \tau_j$} \\
(function type)&&&				$\mid$ $\tau_1 \rightarrow \tau_2$ \\ 
(method type)&&&				$\mid$ \colorbox{gray!50}{$\tau_1 \rightarrow \tau_2$ [$\overline{\tau_i \gg \tau_j}$]} \\
&&&				$\mid$ (a, $\tau$) \\
(dependent function type) &&&				$\mid$ \colorbox{gray!50}{$\Pi$ ($\phi : \Phi$, s : S).$\tau$} \\
(Type Family-I )&&&				$\mid$ \colorbox{gray!50}{($\phi$, s).$\tau$} \\

(Dependent Terms Family) & $\Phi$ & ::= & $\phi \mid \lambda_{m_1,m_2,...m_n}.\phi$\\
 & & & \\
(Presburger Formula) & $\phi$ & ::= & b $\mid$ $\phi_1 \wedge \phi_2$ $\mid$ $\phi_1 \vee \phi_2$ $\mid$ $\sim \phi$ $\mid$ $\exists v. \phi$ \\

(Boolean Expression & (b) & ::= & true $\mid$ false $\mid$ i == j $\mid$ $i \leq j$ $\mid$ $i \geq j$ $\mid$ $i \neq  j$ $\mid$ i == int \\

(Arithmetic Expression) & (i) & ::= & c $\mid$ v $\mid$ c * a $\mid$ $i_1$ + $i_2$ $\mid$ - i \\
& & & \\
(variable name) & x , \^{x} this & & \\
(field name) & f & & \\
(method name) & m , main& &  \\
(type family name) & $\gamma$ & &  \\
(state name) & S & & \\ 
(abstract locations) & $l_i$ & & \\
 \hline

\end{tabular}
\caption{Core Language Syntax} 
\label{table:syntax} 
\end{table*}

\subsubsection{Operational Semantics of the Core Language}

We present a big step operational semantics for the core-language in the appendix section in the view of limited space. The abstract state of the program is defined as a pair ($\Theta, \Delta$), two variable to value maps mapping reference variables to abstract locations and value variables to values respectively. The big step semantics are presented as judgments $(\Theta, \Delta) \vdash e : \rho; (\Theta', \Delta')$. Such a judgment states that an expression $e$ evaluates in the program state $(\Theta, \Delta)$, to an abstract value $\rho$ and changes the program state to $(\Theta', \Delta')$ in the process. If the expression does not evaluate to a value (like, statements), the judgment drops the returned value $\rho$. Interested readers should refer Appendix, section~\ref{semantics-long} for these semantic rules in Figure~\ref{fig:opsemantics1} and~\ref{fig:opsemantics2} along with their detailed explanation.

\subsubsection{Typing Rules}

\paragraph{\textbf{Type Formation}}
The static dependent type system enforces the type and typestate safety. Figures~\ref{fig:expressiontyping},~\ref{fig:decltyping} and~\ref{fig:subtyping} presents the dependent typing rules for language expressions, well formedness of method, field and state declarations,  and subtyping relations respectively. Figure~\ref{fig:typeformation} presents the standard formation, introduction, computation and other related rules for dependent type family. Each judgment in these rules is of the form \textit{$(\Phi, \Gamma) \vdash e : (\Phi', \tau)$}. It states that in the given typing context $\Gamma$ and dependent base terms constraint environment $\Phi$ (ref. table~\ref{table:syntax}), the expression $e$ is well typed and has a type $\tau$ and the typing of the expression updates the $\Phi$ to $\Phi'$. Any well formed type has a kind which we model as $\star$ in our type system.
Here we discuss in detail only the important and non-standard typing rules in view of limited space, rest are easy to follow.   
The \textbf{T-DepFam-F} rule in Figure~\ref{fig:typeformation}, states that a dependent type family could depend on  a pair ($m$, $s$) of a presburger formula based constraint and a state from the finite state set respectively. The rule states, if $m$ has a well formed type $t$ in the environment and if $s$ has a well formed type S, in the environment extended with ($m : t$), then the type family $\Pi(m : t, s : S).\tau$ is well formed. The type system requires $t$ to be the type of Presburger Arithmetic formula.
The \textbf{T-DepFam-I} and \textbf{T-DepFam-C} are standard introduction and the computation rules for the type family. The next rule \textbf{T-DepFam-C-Eq} defines the rule for equality of two dependent type family instances. It states that two instances of dependent family type are equal iff their dependent base terms are equal component wise. The final rule \textbf{T-Eq} states that if two types are equal as per the tying rules then the type system does not differentiates between them.

\begin{figure*}[htbp]

\begin{center}
\hspace*{5ex} \inference[T-DepFam-F]{ \Phi , \Gamma \vdash t \ type & \Phi \vdash {m : t} & \Gamma, (\Phi, {m : t} )  \vdash S \ type } { \Gamma , \Phi \vdash \Pi({m : t, s : S}).\tau \ type } 

\bigskip

\end{center}

\begin{center}
\hspace*{5ex} \inference[T-DepFam-I]{ (\Phi, \Gamma) \vdash {m : t} & (\Phi, {m : t}), \Gamma \vdash s : S    } { (\Phi, \Gamma ) \vdash \lambda(m : t, s : S).\tau : \Pi({m : t, s : S}).\tau } 

\bigskip

\end{center}

\begin{center}
\hspace*{5ex} \inference[T-DepFam-C]{ ( \Phi, \Gamma) \vdash \lambda(m : t, s : S).\tau : \Pi({m : t, s : S}).\tau &  (\Phi, \Gamma \vdash m_c : t) \\ (\Phi, (\Gamma, {m_c : t})) \vdash s_c : S } {  (\Phi, \Gamma) \vdash [m_c / m, s_c / s]\tau : (m_c, s_c).\tau  } 

\bigskip

\end{center}

\begin{center}
\hspace*{5ex} \inference[T-DepFam-C-Eq]{ (\Phi, \Gamma) \vdash m_1 = m_2 : t & (\Phi, \Gamma) \vdash s_1 = s_2 : S } { (\Phi, \Gamma) \vdash (m_1, s_1).\tau  = (m_2, s_2).\tau : \star } 

\bigskip

\end{center}

\begin{center}
\hspace*{5ex} \inference[T-Eq]{ (\Phi, \Gamma) \vdash e : \tau_1 & (\Phi, \Gamma) \vdash \tau_1 = \tau_2 : *} { (\Phi, \Gamma) \vdash e : \tau_2 } 

\bigskip

\end{center}

\caption{Type-Family formation, introduction, computation and equality rules}
\label{fig:typeformation}
\end{figure*}

\paragraph{\textbf{Expression Typing}}

We discuss the most important typing rules. The rule (\textbf{T-new-Dep}) states the typing rule for instantiating a state with initial BR-Typestate. It states, that if the state $S_1$ being instantiated is a well formed declaration(present in State Table, ST), and the presburger formula $\phi_1$ passed as parameter is well formed, then the expression has a dependent type instance $(\phi_1, S).\tau$. The rule also checks the well typedness of the dependent type instance and updates the constraint environment to $\Phi \wedge \phi_1$.
 
The rule (\textbf{T-update}) is the explicit typestate update rule. It first typechecks the right hand expression $e_1$ in the input context and constraint environment and updates the context and the environment. It then checks and updates the type of the left hand expression to the type of the $e_1$. The earlier type of $e$ is discarded, in this sense the Update expression performs a strong type update. 
The rule for match expression (\textbf{T-match}) assigns an arrow type $\tau_1 \rightarrow \tau_u$ to the match expression, where the type of the match conditional expression $e_1$ is $\tau_1$ and $\tau_u$ is a type union over the types for each case expression body. The final constraint environment is a conjunction of the constraints $\phi_i$ imposed by each case expression body $e_i$.  

The (\textbf{T-mcall}) rule  typechecks the base expression $e$ in the pre- context ($\Phi, \Gamma$) and confirms it is an dependent type instance (simple state type $s_i$ can be seen as a constant dependent type $(_, s_i).\tau$). It then  typechecks base expression type, the environment variables type and the actual parameters type against the annotated method type, given by the auxiliary \textit{mtype} routine. Each parameter is checked in a sequentially extended context finally checking the method body $e_m$. The rule ultimately updates the post type of each expression as per the annotated post type in the method type.  

The (\textbf{T-while}) rule checks that the conditional expression $e_1$ is of type \textit{bool} and it updates the incoming environment $\Phi$ to $\Phi_1$, it then validates the associated invariant $\phi$ in $\Phi_1$. It typechecks the body of the while expression while $e_1$ is true ($\Phi_1 \wedge (e_1 == true)$) and confirms whether invariant holds at the end of the while body($\Phi \vDash \exists.\phi$). Finally it validates the invariant when the conditional $e_1$ is false at the exit of the loop. 

\begin{figure*}[htbp]

\begin{center}
\hspace*{5ex} \inference[T-var]{ (\Phi, \Gamma) \vdash \tau :: * & (x, \tau) \in \Gamma } { (\Phi, \Gamma) \vdash x : (\Phi, \tau) } 

\bigskip

\end{center}

\begin{center}
\hspace*{5ex} \inference[T-new]{ decl = \textnormal{state S case of Sup \{...\}} & decl \in ST & \tau = (1, S)} {(\Phi, \Gamma) \vdash \textnormal{new S ($\bar{e_1}$)} : (\Phi, \tau)}

\bigskip

\end{center}

\begin{center}
\hspace*{5ex} \inference[T-new-Dep]{ decl = \textnormal{state S_1 case of Sup \{...\}} & decl \in ST & (\Phi, \Gamma) \vdash (\phi_1, S_1).\tau \ type} {(\Phi, \Gamma) \vdash \textnormal{new $S_1$ ($\phi_{1}$)} : ( \Phi \wedge \phi_1 ,(\phi_1, S_1).\tau) }

\bigskip

\end{center}

\begin{center}
\hspace*{5ex} \inference[T-fref]{ (\Phi, \Gamma) \vdash e : (\Phi_1 , \tau_e) & \tau_e = (\phi_e, S_e).\tau 
\\ \textnormal{decl = state $S_e$ case of S \{ $\bar{ts}$ ; $\bar{fs}$ ; $\bar{ms}$ \} }  
\\ decl \in ST & f \in \bar{fs} & ( \Phi_1, (\Gamma, e : \tau_e) \vdash f : ( \Phi_1, \tau)} {(\Phi, \Gamma) \vdash \textnormal{e.f} : (\Phi_1, \tau) }

\bigskip

\end{center}  

\begin{center}
\hspace*{5ex} \inference[T-update]{ (\Phi, \Gamma \vdash e_1 : (\Phi_1, \tau_1) & ( \Phi_1, (\Gamma, e_1 : \tau_1)) \vdash e : ( \Phi_2, \tau_1) } { (\Phi , \Gamma) \vdash \textnormal{e $\leftarrow$ $e_1$} : (\Phi_2, \tau_1)}
 
\bigskip

\end{center}

\begin{center}
\hspace*{5ex} \inference[T-match]{(\Phi, \Gamma) \vdash e_1 : (\Phi_1, \tau_1) & (\Phi_1, (\Gamma, e_1 : \tau_1)) \vdash \overline{e_i : (\Phi_i, \tau_i \rightarrow \tau_{b_i})} \\
 \forall i. \ \tau_i <: \tau_1 & \Phi_u = \bigvee \Phi_i & \tau_u = \bigcup \tau_{b_i}} 
				  {(\Phi, \Gamma) \vdash \textnormal{match $e_1$  $\overline{case \ e_i}$} : (\Phi_u, \tau_1 \rightarrow \tau_u)}

\bigskip

\end{center}

\begin{center}
\hspace*{5ex} \inference[T-let]{(\Phi, \Gamma) \vdash e_1 : ( \Phi_1, \tau_1) & ( \Phi_1, \Gamma , x : \tau_1 , e_1 : \tau_1) \vdash e : (\Phi_2, \tau)} {(\Phi, \Gamma) \vdash \textnormal{ let x = $e_1$ in e} : ( \Phi_2, \tau)}

\bigskip

\end{center}

\begin{center}
\hspace*{5ex} \inference[T-case]{ (\Phi, \Gamma) \vdash e : (\Phi_1, \tau_1) & (\Phi_1, (\Gamma , e_1 : \tau_1) \vdash e_{b} : ( \Phi_2, \tau_b )} 
{(\Phi, \Gamma) \vdash \textnormal{case  $e$  \{  $e_{b}$ \}} : (\Phi_2, \tau_1 \rightarrow \tau_b) }

\bigskip

\end{center}  

\begin{center}
\hspace*{5ex} \inference[T-mcall]{ (\Phi, \Gamma) \vdash e : (\Phi_1, \tau_b) & \tau_b =  (\phi_b, S_b).\tau \\
 mtype(m , S_b) = \textnormal{$T_r$ m($\overline{T_{i} >> T_{i}'} a _i$)[$\overline{T_{this} >> T_{this}'}$]}\{ e_m \} \\
(\Phi_1, (\Gamma, e : \tau_b) \vdash \tau_b <: T_{this} & (\Phi_1, (\Gamma, e : \tau_b) \vdash \overline{e_i: (\Phi_i, \tau_i)} & \overline{\tau_i <: T_i} 
\\ ((\Phi_1 \wedge (\bigwedge_{i} \Phi_i) (\Gamma, e : \tau_b, \overline{e_i : \tau_i)}) \vdash e_m : (\Phi_r, T_r)} 
{(\Phi,\Gamma) \vdash \textnormal{e.m($e_1, e_2, ... e_p$)} : (\Phi_r, T_r)}

\bigskip

\end{center}  

\begin{center}
\hspace*{5ex} \inference[T-while]{ (\Phi, \Gamma) \vdash e_1 : (\Phi_1, bool) & \Phi_1 \vDash \exists.\phi 
\\ (\Phi_1 \wedge (e_1 == true), (\Gamma, e_1 : bool)) \vdash e : (\Phi_2, \tau) & \Phi_2 \vDash \exists.\phi  \\
(\Phi_1 \wedge \Phi_2 \wedge (e_1 == false) \vDash \exists.\phi } 
{(\Phi, \Gamma) \vdash \textnormal{while [$\exists. \phi$] ($e_1$) \{e\}} : (\Phi_2, \tau) }

\bigskip

\end{center}

\caption{BR-Typestate typing rules for expressions}
\label{fig:expressiontyping}
\end{figure*}

\paragraph{\textbf{Field, Method and State Well Formedness}}
Figure~\ref{fig:decltyping}, presents the typing rules enforcing and checking the well formedness of  fields, methods and states. Each judgment of the form ($\Phi, \Gamma$) $\vdash$ $d$ : ($\Phi', \star$) states that the declaration $d$ is well formed in the context ($\Phi, \Gamma$) and updates the constraint environment to $\Phi'$. The method declaration rule (\textbf{T-m Decl}) needs some elucidation, it typechecks list of parameters $\overline{e_i}$ against the annotated parameter input types, by sequentially updating the context after each such typecheck. For example it checks $e_1$ in the incoming context against the annotated type $\tau_1$. It then extends the context (both $\Phi$ and $\Gamma$) and further checks the $e_2$ in this extended context. In general it typechecks $e_i$ in the extended context generated by the checking of $e_{i-1}$. Finally, it checks the body of the method declaration in the environment extended by the typechecking of $e_m$. The typechecking of the environment variables, parameters and the body in corresponding contexts implies the well formedness of the method declaration. The rule for state declaration, \textbf{T-s Decl} straight forwardly checks the well formedness of all the types, fields, methods and states declared in the state.

\begin{figure*}[htbp]

%
%
%
%
%
%
%
%

\begin{center}
\hspace*{5ex} \inference[T-f Decl ]{ (\Phi, \Gamma) \vdash \textnormal{$\tau$  type} }
{(\Phi, \Gamma) \vdash  \textnormal{ $\tau$ f} : (\Phi, *)  }

\bigskip

\end{center} 

\begin{center}
\hspace*{5ex} \inference[T-m Decl ]{ (\Phi, \Gamma) \vdash e_1 : (\Phi_1, \tau_1) \\
				      (\Phi_1, (\Gamma, e_1 : \tau_1)) \vdash e_2 : (\Phi_2, \tau_2)
				      ... \\
				      (\Phi_{m -1}, (\Gamma, e_1 : \tau_1 ... e_{m-1} : \tau_{m-1})) \vdash e_m : (\Phi_m, \tau_m) \\
				      (\Phi_{m}, (\Gamma, e_1 : \tau_1 ... e_{m} : \tau_{m}, this : \tau_{this})) \vdash e : (\Phi_m, (\Gamma', \tau_r)) \\
				      \forall i. \Gamma'(e_i) = \tau_i' & \Gamma'(this) = \tau_{this}'} 
{(\Phi, \Gamma) \vdash \textnormal{$\tau_r$ m ($\overline{\tau_i >> \tau_i' e_i}$)[$\tau_{this}$ $>>$ $\tau_{this}'$] \{$e$\}} : (\Phi_m, *)}

\bigskip

\end{center}

\begin{center}
\hspace*{5ex} \inference[T-s Decl ]{ \forall f \in fs. (\Phi, \Gamma) \vdash f : (\Phi', * ) \\
\forall t \in tf. (\Phi, \Gamma) \vdash t : (\Phi', * ) \\
\forall m \in ms. (\Phi, \Gamma) \vdash m : (\Phi', * ) \\
(\Phi', \Gamma) \vdash e : (\Phi'' , \tau) } 
{(\Phi, \Gamma) \vdash \textnormal{state S case of S' \{ tf ; fs ; ms ; e \}} : (\Phi'', *)}

\bigskip

\end{center}

\caption{Formation Rules for Field, Method and State Declarations}
\label{fig:decltyping}
\end{figure*}

\paragraph{\textbf{Subtyping}}
Figure~\ref{fig:subtyping}, presents the subtyping rules for the dependent BR-Typestate system. The rule \textbf{T-Sub-Refl} and \textbf{T-Sub-Trans} are standard reflexivity and transitivity rules for subtyping. The rule \textbf{T-Sub-State} defines the subtyping over states, this subtyping relation is definitional in nature such that if sdecl = state S case of $S_1$\{..\}, then $S <: S_1$. The rule \textbf{T-Sub-Str} is the subtyping rule for structural types of the form ($a, \tau$),  $\tau_1$ $<:$ $\tau_2$ holds iff the permission $a_1$ for $\tau_1$ is equal to the permission $a_2$ for $\tau_2$ and recursively ($\tau_{1'} <: \tau_{2'}$).
Rule \textbf{T-Sub-DepTerm} states the subtyping for Dependent term (a presburger formula). It states that if $\phi_1$ and $\phi_2$ are well formed presburger formulas then $\phi_1 <: \phi_2$ iff satisfaction of $\phi_1$ implies the satisfaction of $\phi_2$. Rule \textbf{T-DepFam Sub} defines the subtyping relation for dependent type family instance. It states, the component wise subtyping relation for the dependent type family instance, i.e. if $\phi_1 <: \phi_2$ and $s_1 <: s_2$ then $[\phi_1 / \phi, s_1 / S].\tau  <: [\phi_2 / \phi, s_2 / S].\tau $.
\begin{figure*}[htbp]

\begin{minipage}{.5\textwidth}
\begin{center}
\hspace*{5ex} \inference[T-Sub-Refl]{ \Gamma , \Phi \vdash \tau \ type  } { \tau <: \tau}

\bigskip

\end{center}  
\end{minipage}
\begin{minipage}{.5\textwidth}
\begin{center}
\hspace*{5ex} \inference{ \Gamma , \Phi \vdash \tau_1 <: \tau_2 & \tau_2 <: \tau_3 } {\Gamma , \Phi \vdash \tau_1 <: \tau_3}[T-Sub-Trans]

\bigskip

\end{center}

\end{minipage}

%
%

%
%
\begin{minipage}{.5\textwidth}

\begin{center}
\hspace*{5ex} \inference[T-Sub-State]{ sdecl \ = \ state \ S \ case \ of \ S_1\{...\} \\ sdecl \in ST } { S <: S_1}
\bigskip

\end{center}  

\end{minipage}
\hspace{1cm}
\begin{minipage}{.5\textwidth}

\begin{center}
\hspace*{5ex} \inference{ \tau_1 = (a_1, \tau_{1'}) & \tau_2 = (a_2, \tau_{2'}) \\ \Gamma , \Phi \vdash a_1  = a_2 & \tau_{1'} <: \tau_{2'}  } { \tau_1 <: \tau_2}[T-Sub-Str]

\bigskip
\end{center}  
\end{minipage}

%
%
%

%
%

\begin{minipage}{.5\textwidth}

\begin{center}
\hspace*{5ex} \inference[T-Sub-DepTerm]{ \Phi \vdash \phi_1 \ type , \ \phi_2 \ type \\ \phi_1 \models \phi_2 } { \phi_1 <: \phi_2 }

\bigskip

\end{center}

\end{minipage}
\begin{minipage}{.5\textwidth}
\begin{center}
\hspace*{5ex} \inference{ \Gamma , \Phi \vdash \phi_1 <: \phi_2 \\ \Gamma , \Phi \vdash s_1 <: s_2  } {\Gamma , \Phi \vdash (\phi_1, s_1).\tau <: (\phi_2, s_2).\tau}[T-DepFam Sub]

\bigskip

\end{center}  
\end{minipage}

%
%
%
%

%
%

\caption{Subtyping Rules}
\label{fig:subtyping}
\end{figure*}


\section{Discussion and Analysis}
\label{analysis}

\subsection{Type Soundness}
We present a soundness proof for our BR-Typestate system. 

\begin{theorem}[Progress]
\label{progress-short}
if $\vdash$ t : $\tau$ then either
\begin{itemize}
 \item t is a value. OR
 \item $\exists$ a term t' such that $t \rightarrow t'$.
\end{itemize}

\end{theorem}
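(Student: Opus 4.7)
The plan is to prove Progress by structural induction on the typing derivation $\vdash t : \tau$, read as a judgment over the empty variable context and some satisfiable constraint environment $\Phi$. Under this reading the T-var case is vacuous. For the rules whose subject is already a value (T-new, T-new-Dep, and the constant rule), the first disjunct is immediate. For each compound form (T-fref, T-update, T-mcall, T-let, T-case, T-match, T-while) I apply the induction hypothesis to the immediate subexpressions in the left-to-right order dictated by the big-step semantics of the appendix. If a subexpression can step, I lift that step through the corresponding evaluation-context congruence rule; once all relevant subexpressions are values I invoke a canonical-forms lemma to conclude that they have the expected shape, and fire the appropriate head-reduction: field projection reads from $\Theta$, a method call substitutes actuals into the body returned by $mtype$, update strongly replaces the binding, match selects the branch whose state pattern matches the scrutinee, and while either unrolls its body or exits according to the boolean guard.

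The first main obstacle will be stating and proving a canonical-forms lemma tailored to the dependent instance type $(\phi, s).\tau$: a closed value $v$ of this type must be either a fresh construction $\mathtt{new}\; S(\phi)$ or an abstract location $l_i$ whose heap image $\Theta(l_i)$ carries the same state. The argument proceeds by inversion on T-new-Dep, T-DepFam-I and T-DepFam-C, combined with the observation that any subsumption used in deriving $v : (\phi, s).\tau$ can only come from T-DepFam-Sub (together with T-Sub-Refl and T-Sub-Trans), and therefore preserves the head shape of the dependent instance. Analogous but routine canonical-forms statements are needed for the plain state type $S$, the primitive types, and the permission-annotated pair $(a, \tau)$.

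A second concern is exhaustiveness of the match expression: T-match types each branch independently but does not manifestly ensure that some case pattern matches the scrutinee's runtime state. I intend to discharge this either by appealing to the default branch that the surface syntax admits, or, failing that, by strengthening T-match with a side condition requiring the union of case patterns to cover all subtypes of $\tau_1$ admissible in the state table. A final technical caveat is that the semantics in the appendix is given in big-step form whereas Progress is customarily a small-step statement; I will either work against an equivalent small-step presentation or restate the theorem as ``no well-typed non-value term is stuck,'' reading the witness $t'$ off the premises of the applicable big-step rule. Satisfiability of $\Phi$ is inherited from the way each typing rule threads $\Phi$ into $\Phi'$, so it poses no real difficulty for Progress, though it will be central to the companion Preservation theorem.
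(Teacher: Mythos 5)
Your proposal is correct and follows essentially the same route as the paper's own proof: induction on the typing derivation, with the value-forming rules (T-new, T-new-Dep, T-mDecl) handled immediately, T-var dismissed because the context is empty, and every compound form split into ``a subexpression steps, so lift by congruence'' versus ``all subexpressions are values, so fire the head reduction.'' Where you differ is in rigor rather than strategy, and the differences are to your credit. First, the paper never isolates a canonical-forms lemma; it silently reads off the shape of values of type $(\phi,s).\tau$ inside the T-fref and T-mcall cases, whereas you make this an explicit lemma proved by inversion on T-new-Dep, T-DepFam-I/C and the dependent subtyping rules -- that is the right way to make the argument check. Second, in the T-match case the paper simply asserts that when the scrutinee is a value there exists a case $e_j$ with $State(e_j) <: State(e_1)$, which is exactly the exhaustiveness gap you identify; nothing in T-match as stated guarantees such a branch, so your proposed remedies (relying on a default branch, or adding a coverage side condition to T-match) are needed to make the case go through, and the paper's proof is incomplete without one of them. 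Third, you correctly notice that the operational semantics in the appendix is big-step while Progress is phrased with a small-step relation $t \rightarrow t'$; the paper uses the small-step notation in its proof without ever defining that relation, so your plan to either supply the small-step presentation or restate the theorem as ``well-typed closed terms are not stuck'' addresses a mismatch the paper leaves unresolved. In short, your plan reproduces the paper's argument and patches its two genuine soft spots.
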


\begin{proof}
We prove the above theorem by induction over the derivation of typing rules for the expressions. (refer Appendix, theorem~\ref{soundness} for a detailed proof.)
\end{proof}

\begin{theorem}[Preservation]
\label{preservation-short}
if $\Gamma$, $\Phi$ $\vdash$ t : $\tau$ and  t $\rightarrow$ t', then ($\Gamma'$, $\Phi'$) $\vdash$ t' : $\tau'$ and  ($\Gamma'$, $\Phi'$) $\vdash$ $\tau'$ type.
\end{theorem}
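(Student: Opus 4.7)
The plan is to prove preservation by structural induction on the typing derivation $\Gamma, \Phi \vdash t : \tau$, with inversion on the reduction $t \rightarrow t'$ within each case. Because the judgments update both the type context $\Gamma$ and the Presburger constraint environment $\Phi$, I will need two standard substitution/weakening lemmas tailored to this system before starting: (i) a \emph{substitution lemma} stating that if $(\Phi, \Gamma, x : \tau_1) \vdash e : (\Phi', \tau)$ and $(\Phi, \Gamma) \vdash v : (\Phi_v, \tau_1)$, then $(\Phi \wedge \Phi_v, \Gamma) \vdash [v/x]e : (\Phi', [v/x]\tau)$, and (ii) a \emph{constraint weakening lemma} stating that strengthening $\Phi$ to a logically stronger $\Phi''$ (i.e.\ $\Phi'' \models \Phi$) preserves typing. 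These are needed to thread the Presburger environment through reductions that produce new constraints (e.g., \textbf{T-new-Dep} adds $\phi_1$).

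The easy cases are the structural ones: \textbf{T-var}, \textbf{T-new}, \textbf{T-new-Dep}, \textbf{T-fref}, \textbf{T-let}, \textbf{T-case}, which either do not reduce or reduce by a standard congruence step, so the IH applies directly and the resulting type family instance remains well-formed by \textbf{T-DepFam-F}/\textbf{T-DepFam-I}. For \textbf{T-match} I must show that after reducing the scrutinee and selecting one branch, the resulting type is a subtype of the union $\tau_u$ and the constraint is implied by the union $\Phi_u$; this follows from \textbf{T-Sub-DepTerm} and \textbf{T-Eq}. For \textbf{T-update} I appeal to the fact that the expression rule itself performs a strong update to the type of $e$, so the post-reduction state maps $e$ to a value of the new type $\tau_1$, matching the type assigned by the rule.

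The genuinely delicate cases are \textbf{T-mcall} and \textbf{T-while}. For method call, after reduction the body $e_m$ is evaluated in a context where each actual argument has been substituted for its formal parameter and each pre-typestate $T_i$ is replaced by $T_i'$ in the environment. I would invoke the substitution lemma once per parameter, then use the fact that the method-declaration rule \textbf{T-m Decl} typechecks $e_m$ precisely in the sequentially extended context $(\Phi_m, \Gamma, \overline{e_i : \tau_i}, \mathit{this} : \tau_{\mathit{this}})$ with post-types $\tau_i'$; so the post-reduction state aligns with the annotated post-types $\overline{T_i'}$. Well-formedness of the returned type $T_r$ comes from the same derivation. For \textbf{T-while}, preservation across one iteration requires using the invariant clause $\Phi_2 \models \exists.\phi$ to re-establish the loop's precondition, and the exit clause for the terminating step.

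The main obstacle I anticipate is handling the interaction between dependent-term substitution and the strong typestate update in the presence of aliasing. Because the type of an object can mention Presburger terms that refer to program variables (e.g.\ the $(p,c)$ counters in the producer--consumer example), a reduction that changes one binding can invalidate constraints in the types of aliased references. I would address this by leaning on the permission discipline: the unique/shared/immutable classification guarantees that whenever a strong update happens at a reference, no surviving alias carries a dependent type referring to the updated term, so the substitution lemma applies cleanly and \textbf{T-DepFam-C-Eq} combined with \textbf{T-Eq} gives well-formedness of the resulting type. A careful inversion on the typing derivation together with the permission invariants should discharge this case, and the remaining rules then follow by routine application of the IH.
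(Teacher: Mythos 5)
Your overall strategy is the same as the paper's -- induction on the typing derivation $(\Phi,\Gamma)\vdash t:\tau$ with a case analysis on how $t$ steps -- but your route through the cases is genuinely different and noticeably more demanding than what the authors actually do. The paper's proof treats only the congruence reductions (the scrutinee of a \textbf{match} steps, an argument of a call steps, the right-hand side of an update or \textbf{let} steps, etc.) and never touches the substitution steps that its own progress proof introduces (reducing \textbf{let} to $[\mathit{value}(e_1)/x]e$, or a call to $[e/\mathit{this},e_i/x_i]e_m$); consequently it states no substitution lemma, no weakening lemma, and says nothing about aliasing. It can get away with this partly because the theorem is stated weakly: $t'$ only has to possess \emph{some} well-formed type $\tau'$, not one related to $\tau$, so each congruence case just re-runs the typing rule with the IH-supplied type of the reduced subterm. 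Your plan instead supplies the missing machinery -- a substitution lemma threading the Presburger environment, constraint weakening, the \textbf{T-m Decl}/post-type alignment for the call-beta case, the invariant re-establishment for \textbf{T-while}, and the permission argument for strong updates under aliasing. What your approach buys is a complete, standard syntactic-safety argument that actually covers the substitution reductions; what it costs is that you must prove the extra lemmas, and your aliasing step leans on permission typing rules that the paper explicitly omits, so that part of your argument cannot be discharged from the formalism as given and would have to be developed from scratch. One small caution: as stated, your substitution lemma's output environment $\Phi'$ likely also needs the conjunct $\Phi_v$ (or a substituted form), so check that the constraint bookkeeping is consistent before using it in the \textbf{T-mcall} case.
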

\begin{proof}

The proof is again by the induction on the derivation of $(\Phi, \Gamma) \vdash t : \tau$. 
We present the argument about the preservation for an important subset of cases and for others the argument is similar. At each step of the induction we assume by the induction hypothesis(IH) the preservation holds for the sub-derivations and then to complete the induction argument we prove that the argument hold for the current step.(refer Appendix, section~\ref{soundness}).
\end{proof}

\begin{theorem}[Soundness]
\label{soundness-short}
The typestate system presented in section \ref{dependenttypestate} is sound. Formally, if a term \textbf{t} is a well typed term in our typestate system, then it will never be a stuck term.
\end{theorem}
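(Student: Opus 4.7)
The plan is to derive Soundness as a direct corollary of the Progress and Preservation theorems (Theorems~\ref{progress-short} and~\ref{preservation-short}) stated just above, following the standard Wright--Felleisen recipe. First I would make precise what it means for a term to be \emph{stuck}: a configuration $(\Theta,\Delta)\vdash t$ is stuck if $t$ is not a value and there is no $t'$ together with a state $(\Theta',\Delta')$ such that $(\Theta,\Delta)\vdash t \to (\Theta',\Delta')\vdash t'$ under the big-step operational semantics of Section~\ref{corelanguage} (as rephrased in a small-step form for the purposes of this proof, or equivalently by unfolding big-step derivations). With this in place, Soundness becomes the assertion that no reduction sequence starting from a well-typed term $t$ can ever reach a stuck configuration.

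The argument then proceeds by strong induction on the length $n$ of a reduction sequence $t = t_0 \to t_1 \to \cdots \to t_n$. The base case $n=0$ is immediate from Progress: since $\vdash t_0 : \tau$, either $t_0$ is a value (not stuck) or there exists $t_0'$ with $t_0 \to t_0'$ (also not stuck). For the inductive step, assume every $t_i$ for $i<n$ is well typed under some $(\Phi_i,\Gamma_i)$ with type $\tau_i$. Applying Preservation to the step $t_{n-1} \to t_n$ yields a context $(\Phi_n,\Gamma_n)$ and a type $\tau_n$ such that $(\Phi_n,\Gamma_n) \vdash t_n : \tau_n$ and $(\Phi_n,\Gamma_n) \vdash \tau_n\ \mathit{type}$. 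A second application of Progress to $t_n$ then shows that $t_n$ is either a value or can itself take a step, so $t_n$ is not stuck. Since $n$ was arbitrary, no term in the sequence is stuck, and Soundness follows.

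Two subtleties deserve care while carrying out this plan. The first is that the Preservation statement allows the constraint environment $\Phi$ and the context $\Gamma$ to evolve along reductions (because dependent types carry Presburger constraints that are refined by operations such as \textbf{new S($\phi$)}, explicit updates \textbf{e $\leftarrow$ e}, and the guarded transitions inside method calls). I would therefore maintain as an explicit invariant along the induction that the pair $(\Phi_i,\Gamma_i)$ remains consistent with the runtime state $(\Theta_i,\Delta_i)$ produced by the operational semantics; this is essentially the store-typing condition lifted to the dependent setting, and follows from the same case analysis that underlies Preservation. The second is the treatment of \textbf{while} and \textbf{match}: here I would appeal to the annotated invariant $\exists.\phi$ of rule \textbf{T-while} and to the disjunctive $\Phi_u=\bigvee\Phi_i$ of rule \textbf{T-match} to ensure that Progress is applicable at every loop iteration and every case branch.

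The main obstacle I anticipate is not the combinatorial skeleton of the proof, which is standard, but the bookkeeping of the dependent constraint environment $\Phi$ across steps, in particular showing that the dependent-type family instances $(\phi,s).\tau$ produced by reduction remain well formed and satisfy the subtyping obligations of rules \textbf{T-Sub-DepTerm} and \textbf{T-DepFam Sub}. Once that store/constraint-typing invariant is set up properly, the theorem is a one-line composition of Progress and Preservation, and I would simply state it as such, deferring the tedious case analyses to the appendix referenced by the Preservation proof.
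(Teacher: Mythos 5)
Your proposal is correct and follows essentially the same route as the paper, which proves Soundness directly as the standard Wright--Felleisen composition of Progress (Theorem~\ref{progress-short}) and Preservation (Theorem~\ref{preservation-short}); your additional induction on the reduction-sequence length and the remarks about tracking $(\Phi,\Gamma)$ across steps merely make explicit the bookkeeping the paper leaves implicit.
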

\begin{proof}
By Theorem \ref{progress-short} and \ref{preservation-short}
\end{proof}


\subsection{Expressiveness of BR-Typestate}
\label{equivalencetoMCA}
One crucial question to ask is how expressive is the BR-Typestate system defined earlier. We claim that the language of our type system for BR-Typestate(the language generated by the labeled transitions system defined by the dependent type system) although restricted contains all possible traces generated by a multiple counter machine~\cite{kozenBook}. 

\begin{theorem}[BR-Typestate Expressiveness]
The language of the type system for BR-Typestate(the language generated by the labeled transitions system defined by the dependent type system) contains all possible traces generated by a multiple counter machine.
 
\end{theorem}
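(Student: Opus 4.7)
The plan is to prove this by explicit simulation: I will show that every multiple counter machine $M$ can be encoded as a BR-Typestate program whose induced labeled transition system has exactly the traces of $M$. Fix an arbitrary MCM $M = (Q, \Sigma, \delta, q_0, C_1, \ldots, C_k)$ with finite state set $Q$, input alphabet $\Sigma$, $k$ counters, and transition relation $\delta$ whose elements have the form $(q, \alpha, G(\phi_{\mathrm{pre}}, \phi_{\mathrm{post}})) \to q'$ where $\phi_{\mathrm{pre}}$ is a Presburger guard over counter variables and $\phi_{\mathrm{post}}$ is a Presburger update relating pre-values $c_1, \ldots, c_k$ to post-values $c_1', \ldots, c_k'$ (increment, decrement, zero-test, and identity updates are all expressible).

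First I would give the encoding. For each MCM state $q \in Q$ I introduce a language state $S_q$ in the core language, and I introduce a single dependent type family
\[
\gamma \;\equiv\; \Pi(\phi : \Phi,\, s : S).\tau
\]
where $\Phi$ ranges over Presburger formulas over the $k$ counter variables. The counter configuration $(n_1, \ldots, n_k)$ at MCM state $q$ is represented by the BR-Typestate instance $(\bigwedge_i c_i = n_i,\, S_q).\tau$. For each MCM transition $(q, \alpha, G(\phi_{\mathrm{pre}}, \phi_{\mathrm{post}})) \to q'$ I introduce a method $m_\alpha$ whose annotated pre/post type is exactly
\[
(1, \gamma(\phi_{\mathrm{pre}}, S_q)) \;\gg\; (1, \gamma(\phi_{\mathrm{post}}, S_{q'})).
\]
The body of $m_\alpha$ is a single update expression $\text{buf} \leftarrow \text{new } S_{q'}(\phi_{\mathrm{post}})$, which by rule \textbf{T-new-Dep} produces a value of the advertised post-type and by rule \textbf{T-update} discharges the strong update. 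The initial program state places a unique reference of type $(1, \gamma(\phi_0, S_{q_0}))$ for $\phi_0 \equiv \bigwedge_i c_i = 0$ in the environment.

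Next I would prove, by induction on the length of an MCM run, that every reachable configuration of $M$ corresponds to a well-typed program state in this encoding. The base case is immediate since the initial configuration $(q_0, \vec{0})$ is represented by the initial object value. For the inductive step, suppose the MCM reaches configuration $(q, \vec{n})$ via some trace $w$, represented by the derivation $(\Phi, \Gamma) \vdash \text{buf} : (1, \gamma(\bigwedge_i c_i = n_i, S_q).\tau)$ by the induction hypothesis. If $M$ performs an $\alpha$-transition to $(q', \vec{n}')$ whose pre-guard $\phi_{\mathrm{pre}}$ is satisfied by $\vec{n}$ and whose update $\phi_{\mathrm{post}}$ relates $\vec{n}$ to $\vec{n}'$, then rule \textbf{T-Sub-DepTerm} gives $(\bigwedge_i c_i = n_i) <: \phi_{\mathrm{pre}}$, rule \textbf{T-DepFam Sub} lifts this to the type instance, and so rule \textbf{T-mcall} admits the call $\text{buf}.m_\alpha()$ and updates the environment so that $\text{buf}$ carries the type $(1, \gamma(\bigwedge_i c_i = n_i', S_{q'}).\tau)$. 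Conversely, any accepted call $\text{buf}.m_\alpha()$ forces (again by \textbf{T-mcall} and \textbf{T-Sub-DepTerm}) the current configuration to satisfy $\phi_{\mathrm{pre}}$ and the post-configuration to satisfy $\phi_{\mathrm{post}}$, which are precisely the requirements for an $\alpha$-step of $M$. This two-way correspondence gives a trace-level bisimulation between $M$ and the BR-Typestate transition system, establishing that every trace of $M$ appears in the BR-Typestate language.

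The main obstacle I expect is the faithfulness direction, in particular ruling out spurious BR-Typestate traces that do not arise from $M$. The danger is that the type family $\gamma$ admits many more formulas $\phi$ than pointwise equalities $\bigwedge_i c_i = n_i$, so a naive reading of a well-typed trace might describe a set of configurations rather than a single one. I would handle this by restricting attention to \emph{configuration-canonical} derivations in which the dependent term attached to each reachable type instance is the strongest Presburger formula in the constraint environment $\Phi$ consistent with the MCM semantics; the subtyping closure under \textbf{T-Sub-DepTerm} then accounts for weaker formulas without introducing new observable transitions, since the guards on methods are purely on the counter variables. The remaining bookkeeping (aliasing, permission discipline, and the well-formedness of the $\gamma$ family via \textbf{T-DepFam-F}) is immediate from the type formation rules and from the fact that only the unique permission $1$ is ever involved.
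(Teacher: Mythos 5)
Your proposal is correct for what the theorem actually asserts, but it takes a genuinely different route from the paper. The paper never builds a program: it abstracts the dependent type system itself into a labeled transition system $\mathbb{T}_{br}$ whose states are pairs $(\phi, s)$ of a Presburger formula and a property state and whose actions are the transition types $\tau_i \gg \tau_j$, $\tau_1 \rightarrow \tau_2$ and method types, abstracts the multiple counter machine into a second LTS $\mathbb{T}_{mca}$, defines simulation between LTSs, and then argues by induction on reachable states that a simulation relation $Sim$ containing the pair of initial states exists, matching each MCM transition $(q_{i-1},(c_{i-1},c'_{i-1}),q_i)$ with a type-level transition labeled $\tau_{i-1}\gg\tau_i$. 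You instead give a concrete encoding of an arbitrary MCM as a core-language program (one state $S_q$ per MCM state, one method $m_\alpha$ per transition with the guard as its $\gg$-annotation, an update expression as body) and prove by induction on run length, through the named typing rules (\textbf{T-new-Dep}, \textbf{T-update}, \textbf{T-Sub-DepTerm}, \textbf{T-DepFam Sub}, \textbf{T-mcall}), that every MCM run is realized by a well-typed call sequence. What your route buys is constructiveness: it shows the expressiveness claim is witnessed by programs one can actually write and typecheck, and it makes explicit the rule-level obligations the paper's abstract simulation argument glosses over (e.g.\ how the action set of $\mathbb{T}_{br}$ is realized). What it costs is extra bookkeeping, and two caveats are worth noting: after a call the declared post-type carries the relational formula $\phi_{\mathrm{post}}$ rather than the point formula $\bigwedge_i c_i = n_i'$, so the exact configuration must be recovered from the accumulated constraint environment $\Phi$ rather than from the type instance alone; and your second, "faithfulness" direction (the claimed trace-level bisimulation via configuration-canonical derivations) is not needed for the stated theorem, which only asserts containment of MCM traces, and as sketched it is the weakest part of your argument --- the paper wisely confines itself to one-directional simulation.
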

\begin{proof}
The proof is by reducing our dependent type system to a labeled transition system ($\mathbb{T}_{br}$), modeling a multiple counter machine using another labeled transition system ($\mathbb{T}_{mca}$) and then showing that $\mathbb{T}_{br}$ simulates $\mathbb{T}_{mca}$.(refer Appendix, section~\ref{BR-tsExpressiveness}).

\end{proof}

\subsection{Decidability of Typecheking BR-Typestate}

The typecheking problem for the BR-Typestate is reducible to constraint solving over Presburger Arithmetic formulas. The decidability of the validity problem of Presburger Arithmetic formulas family makes the type checking decidable in our typestate system. 

\begin{theorem}[Reduction to PAF]
\label{reduction-short}
For any general typing relation $(\Phi, \Gamma) \vdash t : (\Phi', \tau)$ in our typestate system, $\exists. \psi \in Presburger Arithmetic Formula$, such that $(\Phi, \Gamma) \vdash t : (\Phi', \tau)$ holds iff $\psi$ is satisfiable.
\end{theorem}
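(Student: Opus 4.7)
The plan is to prove this by structural induction on the derivation of the typing judgment $(\Phi, \Gamma) \vdash t : (\Phi', \tau)$. For each typing rule, I will construct a Presburger arithmetic formula $\psi$ whose satisfiability is equivalent to the derivability of the conclusion. The premises of the rules split into two kinds: purely syntactic premises---state declaration lookups $decl \in ST$, kind formation judgments $\tau$ $\mathbf{type}$, environment membership $(x,\tau) \in \Gamma$, structural-type subtyping and permission equality---are decidable directly and contribute either $\top$ or $\bot$ to $\psi$, while premises that involve the dependent part ($\Phi$, dependent-term subtyping, invariant validity) contribute genuine Presburger subformulas. Because Presburger arithmetic is closed under $\wedge$, $\vee$, $\neg$, and quantification, every combination step stays inside the fragment and the resulting $\psi$ is again a Presburger formula.

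For the base cases (T-var, T-new, constants) the formula $\psi$ is just the incoming environment $\Phi$, optionally conjoined with the annotation $\phi_1$ of T-new-Dep. For composite rules, let $\psi_i$ denote the Presburger formula obtained for each subderivation by the induction hypothesis. Rules with sequentially threaded subderivations (T-let, T-mcall, T-m Decl, T-s Decl) combine the $\psi_i$ conjunctively, mirroring the way $\Phi$ is propagated through their premises; T-match combines its arm-formulas disjunctively, matching $\Phi_u = \bigvee \Phi_i$; T-update together with the dependent-family rules (T-DepFam-C, T-DepFam-C-Eq, T-DepFam Sub) reduce to equality and subtyping queries on dependent terms, which by T-Sub-DepTerm are Presburger implications $\phi_1 \models \phi_2$, i.e.\ the Presburger formulas $\neg \phi_1 \vee \phi_2$. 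In every case, well-formedness of $\psi$ follows from the well-formedness of the annotated presburger terms that the syntax allows.

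The main obstacle is the T-while rule, since without care the body could simulate unbounded counter-machine iteration, and reachability for counter machines is undecidable (which is consistent with the BR-Typestate Expressiveness theorem stated just above). This is precisely why the language requires the programmer to supply the invariant $\exists.\phi$ explicitly. Given the invariant, T-while's three side conditions become three Presburger validity checks, $\Phi_1 \models \exists.\phi$, $\Phi_2 \models \exists.\phi$, and $\Phi_1 \wedge \Phi_2 \wedge (e_1 = \textit{false}) \models \exists.\phi$, each of which is decidable. Conjoining these with the $\psi$-formulas returned by the IH for the conditional $e_1$ and the body $e$ yields the required $\psi$ for the whole loop and completes the induction, thereby reducing BR-Typestate typechecking to satisfiability in Presburger arithmetic.
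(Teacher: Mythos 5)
Your argument is sound at the same level of rigor as the paper's own sketch, but it decomposes the reduction differently. The paper's proof defines a translation routine $\psi(\tau)$ over the \emph{types} themselves: primary types and state types map to trivially satisfiable formulas ($\exists x_\tau.\, x_\tau \neq 0$, $x_S \neq 0$), subtyping $\tau_i <: \tau_j$ becomes the entailment $\psi(\tau_i) \vDash \psi(\tau_j)$, type equality becomes mutual subtyping, and function and change types $\tau_i \rightarrow \tau_j$, $\tau_i \gg \tau_j$ are handled via the formula of the associated method body; the induction runs over formation, well-formedness and subtyping derivations, and the expression-level rules are left implicit. You instead induct directly on the expression typing derivation $(\Phi,\Gamma) \vdash t : (\Phi',\tau)$, building one formula per rule, threading the constraint environments conjunctively (T-let, T-mcall), disjunctively for T-match (mirroring $\Phi_u = \bigvee \Phi_i$), and---importantly---treating T-while explicitly, observing that the programmer-supplied invariant turns the loop into three Presburger validity checks; the paper's proof never mentions the while rule and leaves the invariant issue to the main text. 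What the paper's route makes explicit is where the dependent-term entailments actually arise (T-Sub-DepTerm, T-DepFam-C-Eq), which you only touch in passing; what your route buys is fidelity to the literal theorem statement (a formula per typing \emph{judgment} rather than per type) and an explicit account of why decidability survives loops. A fully detailed proof would need both halves. One small caution on your version: for T-var and T-new you take $\psi = \Phi$, but those rules are derivable regardless of whether $\Phi$ is satisfiable, so the stated ``iff'' strictly requires those satisfiability-irrelevant premises to contribute $\top/\bot$ there as well---a looseness comparable to the paper's writing $\phi_1 \vDash \phi_2$ as though it were itself a Presburger formula.
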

\begin{proof}
The proof is using an inductive argument on the typing derivations of our typestate system. The routine $\psi(\tau)$ defines the presburger formula for $\tau$. (Refer Appendix, section~\ref{decidability}).
\end{proof}

\subsection{Analysis of the Type Inference  Problem}

As described earlier the BR-Typestate system assumes that the {\bf while} syntax is annotated with a loop invariant and we assumed that this is provided by the programmer. This assumption is essential to guarantee  termination of our typechecking algorithm. This could be a hard task for a novice programmer and challenging
even for an experienced programmer. Fortunately, this burden could be placated in certain special subclasses of programs or properties for which the loop invariants could be effectively computed. The loop invariant inference is based on the efficient and decidable verification results~\cite{FlatMCA,FlatnessNotWeakness,Ibarra} for some known subclasses of multiple counter machines, one of which is the {\it Flat Counter Machine}~\cite{FlatMCA}. A multiple counter machine is termed Flat if there is no nested loop in the transition system for the machine. Huber et. al.~\cite{FlatMCA} show that for such machines we can compute a Presburger arithmetic formula representing the  fixpoint for a single loop. Since the invariants needed in our case are presburger formulas, we can plug in this  fixpoint presburger formula for the
loop body in the incoming BR-Typestate at the entry of the loop. For other general class of properties for which such a fixpoint is not effectively computable, we require the programmer to provide an invariant and leave the automatic inference of these invariants for future work.

\section{Applications and Results}
\label{results}

We now discuss some of the practical real world non-regular program properties which we are able to typecheck and enforce through our Typestate system.

DYCK languages are the languages of balanced parentheses. An example string of a DYCK language is ``()(())''. 
\begin{definition}{\textbf{DYCK language}}
Formally, let $\Sigma_{1}$=\{(,)\} be an alphabet consisting of the left and right parentheses. Given word \textit{u} over $\Sigma_{1}$, 
let $D_{1}(u)$ be the number of occurrences of the left parentheses in \textit{u} minus the number of occurrences of the right parentheses in \textit{u}. A word \textup{u} over $\Sigma_{1}$ is said to be a word of well-balanced parentheses, iff
\begin{itemize}
 \item $D_{1}$(u) = 0, and
  \item $D_{1}(v) \geq$ 0 for any prefix \textit{v} of \textit{u}. 
\end{itemize}
\end{definition}

The DYCK language forms the basis of various constructs in programming languages, Internet domain and other fields. For example, markup languages like html, xml, etc., require the programs to be a string of balanced opening and closing elements. Figure~\ref{fig:Dyck} shows a counter machine modeling a DYCK language. The source in our core language captures the states and guards of such machine and skipped due to space limitation.

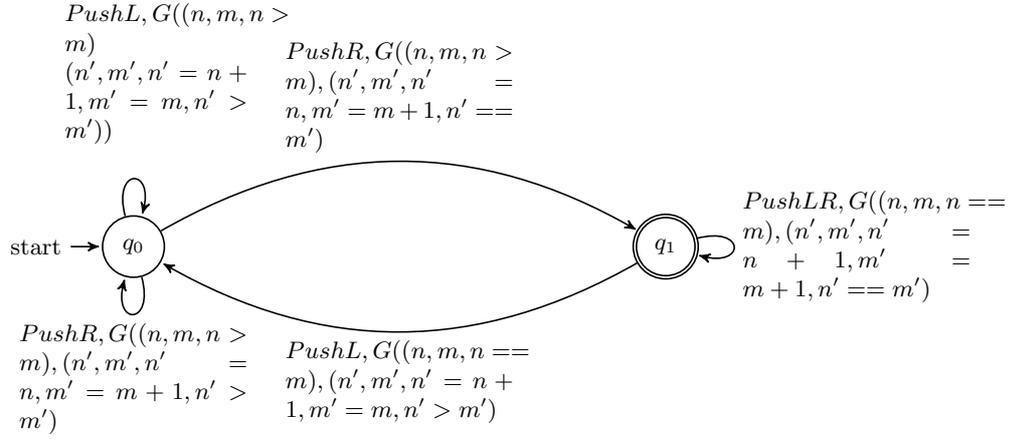
\begin{figure*}[t]
\centering
\begin{tikzpicture}[->,>=stealth',shorten >=1pt,auto,node distance=7cm,
                    semithick]
  \tikzstyle{every state}=[fill=white,draw=black,text=black]

  \node[initial,state] (A)                    {$q_0$};
  \node[state,double] 		(B) [right of=A]  {$q_1$};

  \path (A) edge [loop above] node {\parbox{3cm}{\begin{itemize}
                                                  \item [] $PushL,  G((n, m , n > m)$
						    \item [] $(n' , m', n' = n + 1, m' = m, n' > m'))$ 
                                                 \end{itemize}}} (A)
            edge [loop below] node {\parbox{3cm}{$PushR,  G((n, m , n > m) , (n' , m', n' = n, m' = m + 1, n' > m')$}} (A)
	    edge [bend left] 	node {\parbox{3cm}{$PushR,  G((n, m , n > m) , (n' , m', n' = n, m' = m + 1, n' == m')$}} (B)
	
	(B) edge [loop right] 	node {\parbox{3cm}{$PushLR, G((n, m , n == m) , (n' , m', n' = n + 1, m' = m + 1, n' == m')$}} (B)
	    edge [bend left] 	node {\parbox{3cm}{$PushL, G((n, m , n == m) , (n' , m', n' = n + 1, m' = m, n'> m')$}} (A);

\end{tikzpicture}
\caption{Counter Machine defining the Invariants property for Dyck Language}
\label{fig:Dyck}
\end{figure*}

\begin{definition}{\textbf{Assume Guarantee}}
An important class of program properties which needs to be verified are the \textit{assume-guarantee} properties. These are the properties in which a component (e.g. a function) of the system is specified in terms of the assumptions it makes about its environment (the assume component) and the properties it guarantees about its behavior. The property is naturally represented as $\phi \triangleright \psi$.  

\end{definition}
Assume-guarantee properties are non-regular and hence could not be modeled and enforced using regular typestates. The BR-Typestate by definition models such properties by annotating methods with pre and post constraints. The method assumes certain constraints($\phi$) to be satisfied(\textit{assume}) by the environment and in turn guarantees the output state to satisfy certain constraints(($\psi$), \textit{guarantee}). Thus the \textit{Change Type} $\tau_1 \rightarrow \tau_2 [\overline{\tau_i >> \tau_i'}]$ naturally expresses an assume guarantee property like $\phi \triangleright \psi$, such that $\tau_i \models \phi$ and $\tau_i' \models \psi$.

\begin{definition}{\textbf{Uniform Inevitability Problem}}
The uniform inevitability problem says: there exists some rank \textit{n}, such that every computation sequence of length greater than \textit{n} satisfies some proposition P at rank \textit{n}. The property has been shown to be non-expressible by finite automaton~\cite{Emerson} thus could not be enforced using regular typestate.
\end{definition}

We can model and enforce a variant of Uniform Inevitability problem for a given rank \textit{n} in BR-Typestate. Thus for a given rank \textit{n} and a proposition \textit{P}, we guarantee that a well typed program satisfies -``for all the the paths in the program of length greater than or equal to \textit{n}, the property P holds''.
 
\begin{definition}{\textbf{Train speed control algorithm}}
The train speed control algorithm controls the speed of the train and guarantees the collision free running of the trains. A train could be in one of the four states \textit{viz. ontime, braking , late or stopped}. A safety property for such a control system could be defined as - \textit{``the train is never late (or early) by more than 20 seconds''}. The speed control system is regulated via counters keeping track of number of beacons \textit{b} passed on the rails and a global clock ticks \textit{s}, besides this there is another counter which starts in the braking state and counts the ticks during breaking state \textit{d}. Each state is defined as - The train is ontime iff $ s-9 < b < s+9$, its late iff $ b \in [s-9, s-1]$, its early iff $b \geq s+9$ finally, when $b=s+1$, the train is on time again.
\end{definition}
One property of interest to avoid collisions is- $\forall time, \mid b - s \mid \leq 20$,
which could not be enforced using regular typestate. We modeled and enforced this property in our BR-Typestate system. A counter machine for the train speed control protocol is shown in Appendix, Figure~\ref{fig:train-speed-control-fig}. 

Besides the properties described so far in the work, we modeled and enforced a set of other \textit{non-regular} program properties like (1) checking that any path in the program is in language $a^nb^n$.(2) Classic static array bound checking etc.. None of these could be expressed and enforced using regular typestate. 
%
%
%

\section{Related Work}
\label{relatedworks}

Our core-language is inspired by and built-upon the Typestate Oriented Programming languages works ~\cite{TSOP,Plaid} but, the BR-Typestate has a static type system over the core language rather than enforcing the typestate in the language and we use a dependent type system to implement it. Modular typestate for object-oriented programs~\cite{typestates-for-objects} models the typestates as predicates over object and handle the issues related to subclasses. This handles regular typestate only. We leave modular BR-Typestate for future research wok. 
Extended Static Checking (ESC) for Java~\cite{ESC} is based on first order logic and general theorem proving. Although ESC is expressive, it does not provide or aim for the decidability and the soundness properties of their static checking, while we show our BR-Typestate system to be sound and our static dependent typechecking to be decidable. 
The domain of dependently typed extensions for languages~\cite{DMLC,Xanadu,liquid,constraintedoo} is also related. These works are some restricted form of dependent types, but our work with a Presburger arithmetic domain as constraint and a core state oriented, imperative language differs from these. The idea of restricting the domain for dependent terms follows from Xi et. al.~\cite{DMLC,Xanadu}, but unlike them we use a decidable class of Presburger formulas for which the exact  typechecking and subtyping is decidable and even inferable in certain cases. Liquid types~\cite{liquid} and other refinement types associate invariants about the runtime values with the data using dependent types and statically verify these invariants. Their emphasis is primarily on the automatic inference of these invariants, compared to these, we focus on increasing the expressiveness of regular typestates, yet keeping the exact typechecking decidable by choosing a decidable logic family as dependent terms. Moreover, while they take a conservative approach of subtyping by embedding the implications of their subtyping rules into a decidable logic, we restrict the dependent terms themselves to a decidable logic fragment there by making the exact typechecking problem decidable. Nathaniel et. al.~\cite{constraintedoo}present a constrained type for an immutable state of a Class, and this work is strictly less expressive than our work where we can model and typecheck invariants on any data of the program.

\section{Conclusion}
\label{conclusion}
We have tried to overcome the expressive limitations of \textit{regular typestate}, by defining the concept of BR-Typestate which is expressive yet decidable. We implemented a restricted dependent type system over a state based, imperative core language. We proved important soundness and decidability results for BR-Typestate and corroborated its effectiveness by verifying several real world non-regular properties.

\bibliographystyle{plain}
\bibliography{ref}

\section{Appendix}
\label{appendix}

\subsection{Operational Semantics of the Core Language}
\label{semantics-long}
 The abstract state of the program is defined as a pair ($\Theta, \Delta$), two variable to value maps mapping reference variables to abstract locations and value variables to values respectively. The big step semantics are presented as $(\Theta, \Delta) \vdash e : \rho; (\Theta', \Delta')$. Such a judgment states that an expression $e$ evaluates in the program state $(\Theta, \Delta)$, to an abstract value $\rho$ and changes the program state to $(\Theta', \Delta')$ in the process. If the expression does not evaluate to a value (statements), the judgment removes the returned value $\rho$. Figure~\ref{fig:opsemantics1} presents these semantic rules for the language. Some of these judgments are self explanatory, while the most interesting ones, most closely relevant to the typestate and BR-Typestate are given by the rules \textbf{mcall, let, match, update, and while}. \textbf{mcall} has a call by value semantics. It checks that the receiver reference is mapped to a non-null (null is a special location) location and then creates an extended program state mapping each formal parameter expression $e_i$ to the values of the corresponding actual parameters and it then evaluates the body of the called method in this new extended state to change the state to $(\Theta_{out}, \Delta_{out})$. The \textbf{match} expression  evaluates the match expression $e$ and further evaluates each of the case expressions $e_i$ in this new program state returning $\rho_{ei}$, and possibly changing the state to ($\Theta_i, \Delta_i$). Since, the match expression could match to any of the possible case expression, we create an over-approximate value for state of the system post completion of the rule. Thus $(\Theta_{out}, \Delta_{out})$, is a union over all the state maps generated by each of the case expressions. The returned value $\oplus \rho_{ei}$ is one of the any possible returned value, thus this can bee seen as an indexed set of values, indexed over the case expression $e_i$. The \textbf{update} rule, refer Figure~\ref{fig:opsemantics2} evaluates the source expression $e'$ of the update expression, changing the state to $(\Theta', \Delta')$ and updates the fields of the target expression $e$, \{ $f_1, ... f_p$ \} by the values of the corresponding fields from the source expression $e'$. The final state is the new updated state with updated maps for each field of $e$ and the $e$ itself. The \textbf{while} rule semantics depend on the value of the conditional expression $b$, if the the condition evaluates to false (\textbf{while-false}) while updating the state to $\Theta', \Delta'$ during evaluation of the $b$, the expression evaluates the next expression (or statement) $e_n$ after the while body. The case for true condition (\textbf{while-true}) is much complex,which evaluates the body of the while statement $e$, in the updated environment and evaluates the next expression $e_n$, only in the new state ($\Theta', \Delta'$), which is obtained after a fix point for the loop is reached.

\begin{figure*}[htbp]

\begin{center}
\hspace*{5ex} \inference[const]{\Delta' = \Delta, (\rho \mapsto c) }{(\Theta; \Delta) \vdash c : \rho ; (\Theta, \Delta')}

\end{center}

\bigskip

\begin{center}
\hspace*{5ex} \inference[val-var]{\Delta' = \Delta, (x \mapsto \rho) & \rho = default(\Gamma(x)) }{(\Theta; \Delta) \vdash x : \rho ; (\Theta, \Delta')}

\end{center}

\bigskip

\begin{center}
\hspace*{5ex} \inference[ref-var]{\Theta' = \Theta, (\textnormal{\^x} \mapsto \rho) & \rho = default(\Gamma(\textnormal{\^x})) }{(\Theta; \Delta) \vdash \textnormal{\^x} : \rho ; (\Theta', \Delta)}

\end{center}

\bigskip

\begin{center}
\hspace*{5ex} \inference[de-ref]{ \Theta(\textnormal{\^x}) = \rho \\ \Theta(\rho) \neq null \\ \Theta(\rho) = new S (a_1 : \rho_1, ...,a_p : \rho_p) \\ \Theta' = \Theta[\textnormal{\^x}.f_j \mapsto \rho_j], (\textnormal{\^x} \mapsto \rho))) & \rho_j = \Theta(a_j) }{(\Theta; \Delta) \vdash \textnormal{\^x}.f_j : \rho_j ; (\Theta', \Delta)}

\end{center}

\bigskip

\begin{center}
\hspace*{5ex} \inference[new]{ \Theta' = \Theta, (\rho \mapsto new S() \mid new S (\phi))}{(\Theta; \Delta) \vdash new S() \mid new S (\phi) : \rho ; (\Theta', \Delta)}

\end{center}

\bigskip

%
%
%
%
%

\begin{center}
\hspace*{5ex} \inference[mcall]{\Theta(y) \neq null & \Theta(y) = \rho_m \\ \rho_m := \tau_r m (e_1, e_2,...e_p)[]\{ e_b \} \\ \Theta' = \Theta[e_i \mapsto \Theta(f_i)] \\
				   (\Theta' , \Delta \vdash e_b : \rho_b ;(\Theta'', \Delta'') )}{(\Theta, \Delta) \vdash y.m(f_1, f_2,...f_p) : \rho_b ; (\Theta'', \Delta'')}

\end{center}

\bigskip

\begin{center}
\hspace*{5ex} \inference[let]{ (\Theta, \Delta) \vdash e_1 : \rho_{e1}; (\Theta', \Delta') \\ \Theta'' = \Theta'[x \mapsto \rho_{e1}] \\ (\Theta'' , \Delta') \vdash stmt : \rho ; (\Theta_{out}, \Delta_{out})}{(\Theta; \Delta) \vdash \textnormal{let x = $e_1$ in stmt} : \rho; (\Theta_{out}, \Delta_{out})}

\end{center}

\bigskip

\begin{center}
\hspace*{5ex} \inference[match]{ (\Theta, \Delta) \vdash e : \rho_e; (\Theta', \Delta') \\ 
				    (\Theta', \Delta') \vdash e_i : \rho_{ei}; (\Theta_i, \Delta_i)\\
				    \Theta_{out} = \bigcup \Theta_i & \Delta_{out} = \bigcup \Delta_i \\
				    \rho_{out} \oplus \rho_{ei}}{(\Theta; \Delta) \vdash \textnormal{match e case $e_1 \{ b_1\}... e_p \{ b_p\}$} : \rho_{out} ; (\Theta_{out}, \Delta_{out})}

\end{center}

\bigskip

\caption{Big step operational semantics for the core language}
\label{fig:opsemantics1}
\end{figure*}

\begin{figure*}[htbp]

\begin{center}
\hspace*{5ex} \inference[update]{(\Theta, \Delta) \vdash e' : \rho_{e'}; (\Theta' , \Delta') \\
				    \rho_{e'} = new S_t ( f_1 : \rho_{t1}, f_2 : \rho_{t2}, ... f_p : \rho_{tp})\\
				      \Theta'(e) = \rho_e = new S_s ( f_1 : \rho_{s1}, f_2 : \rho_{s2}, ... f_p : \rho_{sp}) \\
					\Theta'' = \Theta [e \mapsto \rho_{e'}] \\
					  \Theta_{out} = \Theta'' [\forall \rho_e.f_i \mapsto \rho_{ti}]\\
					    \Delta_{out} = \Delta'}{(\Theta, \Delta \vdash e \leftarrow e' : \rho_{e'} ; (\Theta_{out}, \Delta_{out}))}

\end{center}

\bigskip
\begin{center}
\hspace*{5ex} \inference[while-false]{ (\Theta, \Delta) \vdash b : false; (\Theta', \Delta') \\ (\Theta', \Delta') \vdash e_n : \rho_n ; (\Theta'', \Delta'')}{(\Theta; \Delta) \vdash \textnormal{while [$\exists.\phi$] b \{ e \}} ; e_n : \rho_n ; (\Theta'', \Delta'')}

\end{center}

\bigskip

\begin{center}
\hspace*{5ex} \inference[while-true]{ (\Theta, \Delta) \vdash b : true; (\Theta', \Delta') \\ (\Theta', \Delta') \vdash e : \rho_e ; (\Theta'', \Delta'') \\ 
					(\Theta'', \Delta'') \vdash e_n : \rho_n ; (\Theta_{out}, \Delta_{out})}{(\Theta; \Delta) \vdash \textnormal{while [$\exists.phi$] b \{ e \}} ; e_n : \rho_n ; (\Theta_{out}, \Delta_{out})}

\end{center}

\bigskip

\begin{center}
\hspace*{5ex} \inference[seq]{ (\Theta, \Delta) \vdash e_1 : \rho_1 ; (\Theta', \Delta') \\ (\Theta', \Delta') \vdash e_2 : \rho_2 ; (\Theta'', \Delta'')}{(\Theta; \Delta) \vdash e_1 ; e_2  : \rho_2 ; (\Theta'', \Delta'')}

\end{center}

\bigskip
%
%

\caption{Big step Operational semantics for the core language}
\label{fig:opsemantics2}
\end{figure*}
\subsection{Expressiveness of the BR-Typestate type system}
\label{BR-tsExpressiveness}


\begin{definition}[Labeled Transition System]
A labeled transition system $\mathbb{T}$ over alphabet $\Sigma$ is defined as a tuple $\langle S, A , \rightarrow \pi, F \rangle$, where $S$ is a possibly infinite but countable set of states, $F \subseteq S$ is a set of final states, $\rightarrow \subseteq (S \times A \times S)$ is a transition relation over states on action set $A$ and $\pi : S \mapsto \Sigma$ is a labeling function from states to the alphabet set.

\end{definition}

\begin{definition}[BR-Typestate LTS]
We construct an LTS $\mathbb{T}_{br}$ := $\langle S_{br}, A_{br}, \rightarrow_{br}, \pi_{br}, F_{br} \rangle$ such that-
\begin{itemize}
 \item $S_{br} \subseteq (\Phi \times PS)$, where $\Phi$ represents a Presburger Formulas in the dependent type system while the PS is finite or infinite set of property states, given as dependent terms in our type system. Thus in a set theoretic sense a state conceptually is equal to a dependent type instance in our type system dependent on $\phi \in \Phi, s \in PS$.
\item $A_{br}$ is the set of actions which is the set of transition over the types. The types $\tau_1 \rightarrow \tau_2$, $\tau_i >> \tau_j$ and $\tau_1 \rightarrow \tau_2 [\tau_i \gg \tau_j]$ form the action set for $\mathbb{T}_{br}$. Note that these typing rules only allow presburger arithmetic transitions.
\item The labeling function $\pi_{br}$ is trivial and returns the formula $\phi$ and state s for a given state.

\item The transition relation $\rightarrow_{br}$ - For a given state defined by ($\phi_1, s_1$) and a given action $a \in A_{br}$ is defined as- 
\begin{itemize}
 \item if a = $\tau_i >> \tau_j$ or $\tau_i \rightarrow \tau_j$, with $\tau_i := (\phi_i, s_i).\tau, \tau_j := (\phi_j, s_j).\tau$  then (($\phi_i, s_i$), $(\tau_i >> \tau_j)$, $(\phi_j, s_j)$) $\in \rightarrow_{br}$. 
  \item if a = $\tau_1 \rightarrow \tau_2 [\tau_i \gg \tau_j]$, with $\tau_i := (\phi_i, s_i).\tau, \tau_j := (\phi_j, s_j).\tau$ and $\tau_1 := (\phi_1, s_1).\tau, \tau_2 := (\phi_2, s_2).\tau$  then (($\phi_i, s_i$), $(\tau_i >> \tau_j)$ $(\phi_j, s_j)$) $\in \rightarrow_{br}$ and (($\phi_1, s_1$), $(\tau_1 \rightarrow \tau_2)$ $(\phi_2, s_2)$) $\in \rightarrow_{br}$.
\end{itemize}

\end{itemize}

\end{definition}

We first define a multiple counters automata formally and then present an LTS for such a system. Finally we present a formal proof for $\mathbb{T}_{br}$ simulating the LTS for this Multiple Counters Automata.
\begin{definition}[Multiple Counters Automata]
A multiple counters automata is a tuple $(Q, q_i, C , \delta \subseteq Q \times G(C, C') \times Q)$ where-
\begin{itemize}
 \item Q is a finite set of states.
  \item $q_i \in Q$ is an initial state
  \item $C$ is the finite set of counter variable names, $C'$ is the set of primed counter variable names.
\item $G(C, C')$ is the set of guards built on the alphabets $C, C'$. A member of $G(C, C')$ is a conjunction of atomic formulas of the forms $x \sharp y + c, x \sharp c$, where $x, y \in C \cup C', \sharp \in \{ \geq, \leq, =, >, < \}$ and $c \in \mathbb{Z}.$ 
\end{itemize}

\end{definition}

\begin{definition}[Multiple Counters Automata LTS]
We construct an LTS $\mathbb{T}_{mca}$ := $\langle S_{mca}, A_{mca}, \rightarrow_{mca}, \pi_{mca}, F_{mca} \rangle$ such that -
\begin{itemize}
 \item $S_{mca} \subseteq  Q \times (C \cup C')$, such that if $(q, c_i, c_i', q') \in \delta$, then (q, $c_i$) $\in S_{mca}$ and (q', $c_i$) $\in S_{mca}$.
  \item $A_{mca} \subseteq (C \cup C')$. This defines set of formulas from ($C \cup C'$), which encode the actions of the LTS.
  \item $\rightarrow_{mca} \subseteq (S_{mca} \times A_{mca} \times S_{mca})$.
   \item $\pi_{mca} : S_{mca} \mapsto (C \cup C')$, such that $\forall s_i \in S_{mca} = (q_i, c_i), \pi_{mca}(s_i) = (q_i, c_i)$
    \item $F_{mca} \subseteq S_{mca}$
\end{itemize}
\label{def:def-LTS-mca}
\end{definition}
\begin{definition}[Simulation]
Given two LTS $TS_1$ := $\langle S_1, A, \rightarrow_{1}, \pi_{1}, F_{1} \rangle$ and $TS_2$ := $\langle S_2, A, \rightarrow_{2}, \pi_{2}, F_{2} \rangle$. A relation $R \subseteq (S_1 \times S_2)$ is a simulation if $\forall, (p, q) \in R$ and $a \in A$ following holds-
\begin{itemize}
 \item iff $q \in F_2$ then $p \in F_1$. and
  \item iif $(q, a , q') \in  \rightarrow_2$ then $\exists. p' \in S_1$, such that $(p, a, p') \in \rightarrow_1$. and
  \item $(p', q') \in R$.
\end{itemize}
 
If $(p, q) \in R$ then we say that state $p$ simulates state $q$.

\end{definition}
\begin{definition}[Simulation between LTS]
\label{simulationLTS}
Let $p_0$ and $q_0$ be start states for two LTS $T_1$ and $T_2$ respectively. $T_1$ simulates $T_2$ iff $(p_0, q_0) \in R$, where $R \subseteq (S_1 \times S_2)$ is a simulation relation as defined above.
 
\end{definition}

Using these definition now we state and prove important simulation property regarding $\mathbb{T}_{br}$ and $\mathbb{T}_{mca}$.

\begin{theorem}

 If $\mathbb{T}_{br}$ is an LTS for the BR-Typestate type system and another LTS $\mathbb{T}_{mca}$ for the Multiple Counters Automata, then $\mathbb{T}_{br}$ simulates the LTS $\mathbb{T}_{mca}$. Formally. $\exists Sim$. $Sim \subseteq (S_{br} \times S_{mca})$ and start states $p_{0}$ and $q_0$ of $\mathbb{T}_{br}$ and $\mathbb{T}_{mca}$ respectively, then $(p_{0}, q_0) \in Sim $. 
\end{theorem}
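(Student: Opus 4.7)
The plan is to exhibit an explicit simulation relation $Sim \subseteq S_{br} \times S_{mca}$ and verify it satisfies the clauses of Definition~\ref{simulationLTS}. The key observation driving the construction is that each guard $G(C,C')$ of the multiple counters automaton is, by definition, a conjunction of atomic formulas of the form $x \sharp y + c$ and $x \sharp c$ with $\sharp \in \{\geq,\leq,=,>,<\}$, which is precisely a Presburger arithmetic formula over the counter variables. Consequently the dependent base term component $\phi \in \Phi$ of a BR-typestate state is expressive enough to carry exactly the information contained in the counter-valuation/guard part of an MCA state.

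First I would define $Sim$ pointwise. Given an MCA state $(q, c) \in Q \times (C \cup C')$, pick the BR-typestate state $(\phi_c, s_q)$ where $s_q \in PS$ is the property state encoding $q$ (using the finite-state component of the dependent type family exactly as regular typestate encodes automaton states), and $\phi_c$ is the Presburger conjunction describing the valuation/constraint carried by the $C \cup C'$ component at that state. Put $((\phi_c, s_q), (q, c)) \in Sim$, and let $p_0 = (\phi_{c_0}, s_{q_0})$ be the start state of $\mathbb{T}_{br}$ matched to $q_0$ of $\mathbb{T}_{mca}$, establishing the base relation $(p_0, q_0) \in Sim$. Final states are handled by declaring $p \in F_{br}$ iff the related $q \in F_{mca}$, which is a free choice since $F_{br}$ is part of the construction.

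Next I would verify the transfer clause. Suppose $(p,q) \in Sim$ with $p = (\phi_c, s_q)$, and assume $(q, a, q') \in \rightarrow_{mca}$ with $a \in A_{mca}$ corresponding to some guard $G(\phi_1, \phi_2) \in G(C,C')$ along an MCA transition $(q, G(\phi_1,\phi_2), q') \in \delta$. I would construct the matching BR action as a typestate transition $\tau_i \gg \tau_j$ with $\tau_i := (\phi_1, s_q).\tau$ and $\tau_j := (\phi_2, s_{q'}).\tau$. By the third bullet in the definition of $\rightarrow_{br}$, this yields $((\phi_1, s_q), (\tau_i \gg \tau_j), (\phi_2, s_{q'})) \in \rightarrow_{br}$. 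It then remains to choose $p' = (\phi_{c'}, s_{q'})$ where $\phi_{c'}$ is the Presburger formula describing the post-state of the counters, and confirm $(p', q') \in Sim$ by the definition of $Sim$. Induction on the length of traces then extends this one-step matching to full trace simulation.

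The main obstacle I expect is handling the relationship between the \emph{primed} and \emph{unprimed} counter variables coherently: an MCA transition relates the current counters $C$ to the next counters $C'$ inside a single guard, whereas BR-Typestate's transition type $\tau_i \gg \tau_j$ separates pre- and post-states across two distinct dependent instances. I would address this by showing that any $G(C,C')$ can be split into a precondition $\phi_1(C)$ and a postcondition $\phi_2(C')$ (existentially projecting the non-referenced side where needed), so that the separated BR encoding is equisatisfiable with the conjoined MCA guard; the argument is routine once one observes closure of Presburger formulas under existential quantification and conjunction. A secondary but lighter concern is that actions in the two systems live in different alphabets, which is resolved by defining a bijection between $A_{mca}$ and the subset of $A_{br}$ consisting of typestate transitions whose dependent terms match the MCA guards; this bijection is what allows $Sim$ to be a simulation in the sense of a common-action relation rather than a mere bisimulation up to relabeling.
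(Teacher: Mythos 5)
Your construction matches the paper's own proof: the paper likewise defines $Sim$ by pairing each MCA state $(q,c)$ with the BR-state built from the same counter constraint and property state, handles start/final states by construction, and inductively matches each guarded MCA transition with a BR action $\tau_{i} \gg \tau_{j}$ whose dependent instances carry the guard's pre/post components. Your extra care in splitting $G(C,C')$ into separate pre- and post-condition Presburger instances (via existential projection) and in reconciling the two action alphabets tidies up details the paper's proof silently glosses over, but the underlying argument is essentially the same.
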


\begin{proof}
The proof is an inductive constructive proof on transition relation over $\mathbb{T}_{br}$ and $\mathbb{T}_{mca}$ over finite action set. 

\textbf{Base case} -If $(q_0) = (s_0, c_0) \in F_{mca}$ then by construction we have a state $p_0 \in S_{br}$, such that $p_0 = (c_0, s_0)$ and $p_0 \in F_{br}$.

\textbf{Induction Hypothesis} - Let, for any state $q_{i-2} = (s_{i-2}, c_{i-2}) \in S_{mca}$, then $\exists p_{i-2} = (c_{i-2}, s_{i-2}) \in S_{br}$ such that $(p_{i-2}, q_{i-2}) \in Sim$.

\textbf{Inductive Step}- By IH, $(p_{i-2}, q_{i-2}) \in Sim$, thus by the definition of simulation, states $(p_{i-1}, q_{i-1})$ reachable from $(p_{i-2}, q_{i-2}) \in Sim$. Thus we look at the transitions from $p_{i-1}$ and $q_{i-1}$. 
$\forall$ transitions $\alpha_{mca}$, from $q_{i-1}$, where $\alpha_{mca} := (q_{i-1}, (c_{i-1}, c'{i-1}), q_i) \in \rightarrow_{mca}$ we can always construct a transition $\alpha_{br} := (p_{i-1}, a_i, p_i) \in \rightarrow_{br}$, where $a_i = \tau_{i-1} >> \tau_i$ such that $\tau_{i-1} = (s_{i-1}, c_{i-1}).\tau$ and $\tau_{i} = (s_{i}, c'_{i-1}).\tau$. Thus $(p_{i-1}, q_{i-1}) \in Sim$. Hence by induction, $\forall q_i \in S_{mca}, \exists p_i \in S_{br}$ such that $(p_i, q_i) \in Sim$.

\end{proof}

\begin{corollary}
 $(p_0, q_0) \in Sim$ and thus by definition~\ref{simulationLTS} $\mathbb{T}_{br}$  simulates $\mathbb{T}_{mca}$.  
\end{corollary}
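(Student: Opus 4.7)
The plan is to exhibit an explicit simulation relation and verify the simulation conditions by induction on trace length, leveraging the structural parallel between the two LTS set up in the definitions just given. A natural candidate is
\[
Sim \;=\; \{\,((\phi, s),\,(q, c)) \in S_{br} \times S_{mca} \;:\; s = q \text{ and } \phi \equiv \mathrm{val}(c)\,\},
\]
where $\mathrm{val}(c)$ denotes the Presburger formula that pins down the counter valuation $c$. Since each MCA guard in $G(C, C')$ is by definition a conjunction of atoms of the form $x \sharp y + k$ or $x \sharp k$, every such guard already lies in the Presburger fragment $\Phi$ that the BR-Typestate system admits as dependent base terms, so the encoding is well-defined.

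For the base case the initial MCA configuration $q_0 = (s_0, c_0)$ is paired with $p_0 = (\mathrm{val}(c_0), s_0) \in S_{br}$, giving $(p_0, q_0) \in Sim$ directly. For the inductive step I would assume $(p_{i-1}, q_{i-1}) \in Sim$ and, for an arbitrary MCA transition $(q_{i-1}, g, q_i) \in\, \rightarrow_{mca}$ with guard $g \in G(C, C')$, construct a matching BR-Typestate step $(p_{i-1}, a_i, p_i) \in\, \rightarrow_{br}$ by synthesizing the action $a_i = \tau_{i-1} \gg \tau_i$, where $\tau_{i-1} = (\phi_{i-1}, s_{i-1}).\tau$ and $\tau_i = (\phi_i, s_i).\tau$ are chosen so that $\phi_{i-1}$ and $\phi_i$ together encode the unprimed and primed portions of $g$. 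The action is admissible by the clause of $\rightarrow_{br}$ covering change types, and $(p_i, q_i) \in Sim$ then holds by construction, closing the induction.

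The main obstacle will be handling guards that couple primed and unprimed counters simultaneously, e.g.\ an update of the form $c' = c + 1$, which cannot be split cleanly into an independent pre- and post-formula. I would address this by treating $\tau_{i-1} \gg \tau_i$ as an atomic transition whose semantics enforces the joint pre/post constraint, appealing to the expression typing rule \textbf{T-update} (which atomically retypes an expression) and to the subtyping rules \textbf{T-Sub-DepTerm} and \textbf{T-DepFam Sub} to justify that a linked pair of Presburger formulas suffices to model the conjunctive guard faithfully. A secondary point to pin down is that the common alphabet $A$ required by the simulation definition can be taken to be the set of primitive change-type labels, so that the action sets of $\mathbb{T}_{br}$ and $\mathbb{T}_{mca}$ line up step-by-step along any trace. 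Once the inductive argument is in hand, the subsequent corollary follows immediately from Definition~\ref{simulationLTS}.
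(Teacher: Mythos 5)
Your proposal is correct and follows essentially the same route as the paper: an inductive, constructive matching of each guarded MCA transition by a change-type action $\tau_{i-1} \gg \tau_i$ whose dependent terms $(\phi,s)$ encode the counter valuation and control state, after which the corollary is immediate from Definition~\ref{simulationLTS}; your version is in fact slightly more explicit, since you write down the relation $Sim$ concretely via $\mathrm{val}(c)$ and discuss how guards coupling primed and unprimed counters are absorbed into the pre/post pair. The only point present in the paper's argument but absent from yours is the final-state clause of the simulation definition (the paper's base case checks $q\in F_{mca}$ implies the matching $p\in F_{br}$ ``by construction''), which you should add in one line to fully discharge the definition of simulation.
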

\subsection{Proof of Soundness of Type System}
\label{soundness}
\begin{theorem}[Progress]
\label{progress}
if $\vdash$ t : $\tau$ then either
\begin{itemize}
 \item t is a value. OR
 \item $\exists$ a term t' such that $t \rightarrow t'$.
\end{itemize}

\end{theorem}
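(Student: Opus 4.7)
The plan is to proceed by structural induction on the derivation of the typing judgment $\vdash t : \tau$, performing a case analysis on the last rule applied. For each typing rule in Figures~\ref{fig:expressiontyping} and~\ref{fig:typeformation}, I will show either that the term is syntactically a value (fitting into the \emph{value} production of Table~\ref{table:syntax}: constants, locations, or fully applied \textbf{new} expressions), or that some rule from the big-step operational semantics in Figures~\ref{fig:opsemantics1} and~\ref{fig:opsemantics2} applies to produce a reduction $t \rightarrow t'$. Because the statement is at the empty context, the typing environment contains no free variables, so every subterm whose typing derivation is invoked will also be closed.

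The easy cases are \textbf{T-var}, \textbf{T-new}, \textbf{T-new-Dep}, and \textbf{T-DepFam-I}: by inspection these are already values (or reduce in one step via \textbf{const}/\textbf{new} in the semantics). For \textbf{T-let}, \textbf{T-fref}, and \textbf{T-update}, I would invoke the induction hypothesis on the subexpressions; if they are not values they step, and congruence allows the whole expression to step; if they are values, the corresponding semantic rules (\textbf{let}, \textbf{de-ref}, \textbf{update}) apply directly. For \textbf{T-mcall} I would use the IH on the receiver $e$ to produce a value $\rho_b$, then appeal to a canonical forms lemma that says a value of dependent instance type $(\phi_b, S_b).\tau$ must be a \textbf{new}-expression of an appropriate state $S_b$, so that $\mathit{mtype}(m, S_b)$ is defined and the \textbf{mcall} rule fires after the actuals are reduced to values (again by IH). The \textbf{T-case} and \textbf{T-match} cases are handled analogously via the \textbf{match} semantic rule, where the scrutinee's canonical form selects a branch.

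The delicate cases are \textbf{T-while} and the equality/subsumption rules \textbf{T-Eq} and \textbf{T-DepFam-C}. For \textbf{T-while}, the conditional $e_1$ has type $bool$, so by IH and canonical forms it is either \textbf{true} or \textbf{false}; in either case the corresponding \textbf{while-true} or \textbf{while-false} semantic rule applies. I do not need the invariant $\phi$ for Progress itself --- it is needed for Preservation and termination of typechecking, but Progress only needs a step to exist. For \textbf{T-Eq} and \textbf{T-DepFam-C-Eq}, the derivation is essentially transparent: since the underlying term is unchanged and only the assigned type is converted along a definitional equality, I can directly invoke the IH on the sub-derivation that types $t$ with $\tau_1$ to conclude the same disjunction for $\tau_2$.

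The main obstacle I anticipate is the canonical forms lemma for the dependent instance types $(\phi, s).\tau$, since the type system mixes structural permissions $(a, \tau)$, dependent instances, and ordinary state types. I would discharge this by proving a separate lemma, by induction on typing derivations of closed values, establishing that a closed value of type $(\phi, s).\tau$ is syntactically \textbf{new}~$S(\phi')$ for some $S$ with $S <: s$ and $\phi' \models \phi$ (using the subtyping rules \textbf{T-Sub-State} and \textbf{T-Sub-DepTerm}). Once this lemma is in hand, the \textbf{mcall}, \textbf{de-ref}, \textbf{match}, and \textbf{update} cases all follow by direct application of the corresponding semantic rule. The full detailed case analysis is deferred to the appendix proof of Theorem~\ref{soundness}.
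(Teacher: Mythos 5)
Your plan follows essentially the same route as the paper's proof: induction on the typing derivation, treating \textbf{T-new}/\textbf{T-new-Dep} (and declarations) as value cases, and handling \textbf{T-let}, \textbf{T-fref}, \textbf{T-update}, \textbf{T-match}, \textbf{T-mcall} and \textbf{T-while} by the induction hypothesis on subexpressions plus the corresponding semantic rule, with \textbf{T-while} split on the truth value of the guard and \textbf{T-var} discharged by the empty context (the paper dismisses it as vacuous rather than as a value, a minor imprecision on your side). The only substantive difference is that you make explicit the canonical-forms lemma for dependent instance types and the conversion cases (\textbf{T-Eq}, \textbf{T-DepFam-C-Eq}), which the paper's proof uses only implicitly; this is a refinement of the same argument, not a different approach.
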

We prove the above theorem by induction over the derivation of typing rules for the expressions.
\begin{proof}
The base cases exists for terms which are values, \textit{viz.} T-New, T-New-Dep and T-mDecl. The case T-Var is trivially satisfied as the term is not typable in an empty context. The interesting cases to consider are T-Let, T-Fref, T-Update, T-Match, T-Case  and T-While. 
\begin{itemize}
 \item T-Let - t := let x = $e_1$ in e. By IH either $e_1$ is a value in which case t reduces to the substitution [value($e_1$) / x]e, or $e_1 \rightarrow e_{1'}$ in which case t $\rightarrow$ t', such that t' := let x = $e_{1'}$ in e.

 \item T-Fref - t := let \^x.f = $e_1$ in e. The argument for the T-Let holds in this case too.

 \item T-Update - t := e $\leftarrow$ $e_1$ ; $e_n$, By IH either $e_1$ is a value, in which case t is reduced to [value($e_1$) / e]$e_n$, or $e_1 \rightarrow e_{1'}$ thus t $\rightarrow$ t', such that t' := e $\leftarrow e_{1'}$.
 \item T-Match - t := match $e_1$ $\overline{\textnormal{case} e_i}$, By the rule T-Match , $\vdash e_1 : State$, by IH, either $e_1$ is a value in which case $\exists. e_j \in \overline{e_i}$ such that \textit{State($e_j$)} $<:$ \textit{State($e_1$)}, and t $\rightarrow$ t', where t' = body of case $e_j$. Else, if $e_1$ $\rightarrow e_{1'}$, t $\rightarrow$ t'', where t'' := match $e_{1'}$ $\overline{\textnormal{case} e_i}$.

 \item T-Case - The argument of T-Case is standard , where the expression is reduced to the body of the case expression.

 \item T-mcall - t := e.m($e_1$, $e_2$,...,$e_p$)- Reduced to cases -
  \begin{itemize}
   \item By IH on the expression e and each of $e_i$ $ 1 \leq i \leq p $, e and $e_i$ is a value, in this case t is reduced to [e/\textit{this} , $e_i$/$x_i$]$e_m$, where \textit{this} is the base object and each of $x_i$ are the formal argument in the method declration and $e_m$ is the body of the method m.

   \item if e is a value and $\exists e_i $ $ 1 \leq i \leq p $, such that $e_i \rightarrow e_{i'}$, then t $\rightarrow$ t' with t' := e.m($e_1$, $e_2$,... $e_{i-1}$, $e_{i'}$...,$e_p$).

   \item if e $\rightarrow$ e' then t $\rightarrow$ t' with t' := e'.m($e_1$, $e_2$,..., $e_p$).
   
  \end{itemize}

\item T-While - t := while [$\exists.\phi$] ($e_1 : Bool$, $e_2$); stmt , this is a standard While case with case wise split for $e_1$ = true and false.
\end{itemize}

\end{proof}

\begin{theorem}[Preservation]
\label{preservation}
if $\Gamma$, $\Phi$ $\vdash$ t : $\tau$ and  t $\rightarrow$ t', then ($\Gamma'$, $\Phi'$) $\vdash$ t' : $\tau'$ and  ($\Gamma'$, $\Phi'$) $\vdash$ $\tau'$ type.
\end{theorem}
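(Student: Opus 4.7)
The plan is to prove preservation by induction on the derivation of $(\Phi,\Gamma) \vdash t : \tau$, proceeding by case analysis on the last typing rule used and considering the possible reductions $t \rightarrow t'$ delivered by the big-step operational semantics in the appendix. The base cases are the value-typing rules (\textbf{T-new}, \textbf{T-new-Dep}, constants, and variables mapped directly to values) where no reduction applies, so the obligation is vacuous, and \textbf{T-var}, where the environment look-up guarantees well-formedness of $\tau$ by \textbf{T-DepFam-F}. Before starting the cases I will state and invoke two auxiliary lemmas that are routine but indispensable: (i) a \emph{substitution lemma}, which says that if $(\Phi,\Gamma, x{:}\tau_1) \vdash e : \tau$ and $(\Phi,\Gamma) \vdash v : \tau_1$ then $(\Phi',\Gamma) \vdash [v/x]e : \tau$ for an appropriately strengthened $\Phi'$, proved by induction on the derivation of $e$; and (ii) a \emph{weakening/context-monotonicity lemma}, showing that strengthening $\Phi$ to a logically stronger $\Phi'$ preserves well-typedness, which follows from \textbf{T-Sub-DepTerm} and \textbf{T-DepFam Sub}.

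The inductive step then considers each elimination-like rule. For \textbf{T-let}, if $e_1$ reduces to $e_1'$, apply the IH to obtain $(\Phi_1',\Gamma) \vdash e_1' : \tau_1$ and reassemble the let using the same second premise; if $e_1$ is a value, use the substitution lemma. The \textbf{T-mcall} case proceeds analogously: either a subexpression reduces (IH plus rule reassembly), or all arguments are values and the body $e_m$ is substituted for the formal parameters, in which case the substitution lemma and the postcondition annotations $\overline{\tau_i \gg \tau_i'}$ give the required output type $T_r$ together with the updated $\Phi_r$. For \textbf{T-update} the key observation is that the rule already performs a strong update to $\tau_1$, so after reducing $e \leftarrow e_1$ to a value the residual type coincides with $\tau_1$ and $\Phi$ advances to $\Phi_2$ exactly as the typing rule prescribes. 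The \textbf{T-match} and \textbf{T-case} cases use subtyping: the reduct enters one branch with body-type $\tau_{b_j}$, and $\tau_{b_j} <: \bigcup \tau_{b_i} = \tau_u$ combined with $\Phi_j \vDash \bigvee \Phi_i$ yields the overall judgment via \textbf{T-Eq}.

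The most delicate case will be \textbf{T-while}, because the semantic rule \textbf{while-true} loops and can in principle execute the body arbitrarily many times, so the preserved type must remain stable across unboundedly many iterations. Here I will exploit the annotated invariant $\phi$ directly: the typing rule has already established $\Phi_1 \vDash \exists.\phi$ on entry and $\Phi_2 \vDash \exists.\phi$ on exit of the body, so by the substitution lemma the body's post-state satisfies the same invariant that is required to re-enter the loop. The preservation obligation for the whole \textbf{while} therefore reduces to showing that each single unfolding preserves the invariant, and this follows from the third premise $\Phi_1 \wedge \Phi_2 \wedge (e_1 = \textit{false}) \vDash \exists.\phi$ together with the IH applied to the body. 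Reassembling gives $(\Phi',\Gamma') \vdash t' : \tau$ with $(\Phi',\Gamma') \vdash \tau$ \emph{type} by \textbf{T-DepFam-F}.

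The hard part will be keeping the dependent-term constraint environment $\Phi$ synchronized with the reduction: unlike a simple type system, the ``preserved type'' here is taken up-to the equational theory of \textbf{T-DepFam-C-Eq}, so I expect to repeatedly appeal to \textbf{T-Eq} and \textbf{T-Sub-DepTerm} to bridge between the type produced by the IH and the one demanded by the reassembled derivation. Care is also needed in the \textbf{T-mcall} case to avoid variable capture when substituting actual Presburger arguments for the formal dependent parameters of $\Pi(\phi{:}\Phi, s{:}S).\tau$; I plan to handle this by $\alpha$-renaming inside the substitution lemma, exactly as in standard dependent type theory.
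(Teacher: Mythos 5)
Your proposal is correct in outline and shares the paper's core structure -- induction on the derivation of $(\Phi,\Gamma) \vdash t : \tau$ with a case analysis on the last rule and on how $t$ reduces -- but it is genuinely more complete than the proof the paper actually gives. The paper's proof treats only the congruence cases (a subexpression steps and the term is reassembled: \textbf{T-F-Ref}, \textbf{T-update}, \textbf{T-match}, \textbf{T-let}, \textbf{T-mcall}, \textbf{T-while}, with values handled vacuously), and it never states or uses a substitution lemma; the $\beta$-like steps that its own Progress proof introduces (e.g.\ reducing \textsf{let} to $[\mathit{value}(e_1)/x]e$, or a call to $[e/\mathit{this},\,e_i/x_i]e_m$) are silently omitted. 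Your plan covers exactly those cases via the substitution lemma, plus a weakening/monotonicity lemma for $\Phi$, and you also aim at the stronger conclusion that the type is preserved up to \textbf{T-Eq}/\textbf{T-Sub-DepTerm}, whereas the paper's statement only asks for \emph{some} well-formed $\tau'$. What the paper's route buys is brevity; what yours buys is that the theorem actually closes under all the reductions used in Progress, which is the standard and safer decomposition for a dependently typed calculus. Two small corrections to your sketch: the reduction relation $t \rightarrow t'$ in the theorem is small-step (the big-step judgments $(\Theta,\Delta) \vdash e : \rho;(\Theta',\Delta')$ are a separate artifact), so frame the case analysis on the small-step rules as the paper's Progress proof does; and in the \textbf{T-while} case the re-establishment of the invariant across an unfolding comes from the premise $\Phi_2 \vDash \exists.\phi$ (the body's postcondition), not from the third premise $\Phi_1 \wedge \Phi_2 \wedge (e_1 = \mathit{false}) \vDash \exists.\phi$, which only governs loop exit, and not from the substitution lemma.
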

\begin{proof}
%
%
%

The proof is by the induction on the derivation of $(\Phi, \Gamma) \vdash t : \tau$
We present the argument about the preservation for an important subset of cases and for others the argument is similar. At each step of the induction we assume that by Induction hypothesis, the preservation lemma holds and then to complete the induction argument we prove that the argument hold for the current step.
\begin{itemize}
 \item T-New, T-New-Dep, T-mDecl, since these are values and thus $\nexists$ t' such that t $\rightarrow$ t' and thus the argument vacuously holds for these typing derivation rules.
 \item T-F-Ref , t := e.f : $\tau$ type, now by IH if e $\rightarrow$ e' then t $\rightarrow$ t', where t' := e'.f and e' is well typed. By T-F-Ref ($\exists \phi_e, s_e, \tau_e \ and \ \phi_{e'}, s_{e'}, \tau_{e'}$), such that $\Gamma(e) := (\phi_e, s_e).\tau_e$ and $\Gamma(e') := (\phi_{e'}, s_{e'}).\tau_{e'}$. Let \textit{sdecl $s_{e'}$ = state $s_{e'}$ case of $s_x$ \{... $f:\tau_{f}$..\}}, thus the type of t' := $\tau_f$.

 \item T-Update, t := e $\leftarrow$ e'  and $(\Phi, \Gamma \vdash)$ t : $\tau$, if e' $\rightarrow$ e'', then t $\rightarrow$ t' and t' := e $\leftarrow$ e''. By IH if $(\Phi, \Gamma) \vdash$ e' : $\tau'$ then after e' $\rightarrow$ e'', $(\Phi, \Gamma) \vdash$ e'' : $\tau''$. Thus by T-update, $(\Phi, \Gamma \vdash)$ t' : $\tau''$). 
  
  \item T-match, t := match $e_1$ $\overline{\textnormal{case} e_i}$, $(\Phi, \Gamma) \vdash$ t : $\tau_1 \rightarrow \tau_u$. There are two possible ways of reduction of t $\rightarrow$ t'- 
    \begin{itemize}
     \item If $e_1 \rightarrow e_{1'}$, then t $\rightarrow$ t', such that t' := match $e_{1'}$ $\overline{\textnormal{case} e_i}$. By IH if $(\Phi, \Gamma) \vdash e_1 : \tau_1$ then $(\Phi, \Gamma) \vdash e_{1'} : \tau_1'$. By T-match, $(\Phi, \Gamma) \vdash t' : (\tau_1' \rightarrow \tau_u)$.
    \item If for some $e_i$, $e_i \rightarrow e_{i'}$ then t $\rightarrow$ t', such that t' := match $e_{1}$ $\overline{\textnormal{case} \ e_{i-1}} \ case e_{i'} \ \overline{\textnormal{case} \ e_{i+1}}$. By IH, if $(\Phi, \Gamma) \vdash e_i : tau_i$ then $(\Phi, \Gamma) \vdash e_{i'} : \tau_i'$. By T-match, let $\tau_u'$ = $\bigcup \tau_{1}...\tau_{i-1} \tau_{i'}..\tau_{k}$ then $(\Phi, \Gamma) \vdash t' : (\tau_1 \rightarrow \tau_u')$.
    \end{itemize}

  \item T-let, t := let x = $e_1$ in e. There are two distinct possibilities of reduction of t $\rightarrow$ t'-
  
  \begin{itemize}
   \item If $e_1 \rightarrow e_{1'}$, by IH $(\Phi, \Gamma) \vdash$ $e_1' : \tau_1'$. Let $(\Phi, \Gamma, x : \tau_1', e_1' : \tau_1') \vdash$ e : $\tau'$, then t' := let x = $e_1'$ in e and $(\Phi, \Gamma) \vdash$ t' : $\tau'$.
  \item If e $\rightarrow$ e', by IH $(\Phi, \Gamma) \vdash e' : \tau'$. thus for t $\rightarrow$ t', $(\Phi, \Gamma) \vdash$ t' : $\tau'$.
  \end{itemize}
%

  \item T-mcall, t := e.m($e_1, e_2, ...,e_p$). There are two distinct possibilities of reduction of t $\rightarrow$ t'-
  \begin{itemize}
   \item e.m(...) $\rightarrow$ e'.m(...), if e $\rightarrow$ e'. By IH, let $(\Phi, \Gamma) \vdash$ e' : $\tau_b'$. By T-mcall, let $((\Phi \wedge (\bigwedge_{i} \Phi_i) (\Gamma, e' : \tau_b', \overline{e_i : \tau_i)}) \vdash e_m : T_r')$ then t' : $T_r'$.
    \item e.m(...,$e_k$,...,$e_p$) $\rightarrow$ e.m(...,$e_k'$,...,$e_p$) for some $k \in [1,p]$ if $e_k \rightarrow e_k'$. By IH $(\Phi, \Gamma) \vdash$ $e_k'$ : $\tau_k'$. By T-mcall, let $((\Phi \wedge (\bigwedge_{i} \Phi_i) (\Gamma, e : \tau_b, \forall i \in \{[1, p]\setminus k\} \overline{e_i : \tau_i)}, e_k' : \tau_k') \vdash e_m : T_r')$, then t' : $T_r'$.

  \end{itemize}

  \item T-while, $\textnormal{while [$\exists. \phi$] ($e_1$) \{e\}}$. Again two distinct possible way of reduction of t $\rightarrow$ t'-
  \begin{itemize}
   \item If $e_1 \rightarrow e_1'$, by T-while $e_1' : bool$, and let $(\Phi_1 \wedge (e_1' == true), (\Gamma, e_1' : bool)) \vdash e : (\Phi_2, \tau') \ \ \Phi_2 \vDash \exists.\phi$, then t' : $\tau'$.
   \item If e $\rightarrow$ e', By IH $(\Phi_1 \wedge (e_1 == true), (\Gamma, e_1 : bool)) \vdash e' : (\Phi_2, \tau') \ \ \Phi_2 \vDash \exists.\phi$, then t' : $\tau'$.
  \end{itemize}

\end{itemize}

\end{proof}

\subsection{Proof of Decidability of Typechecking}
\label{decidability}
The typecheking problem for the BR-Typestate, is always reducible to constraint solving over Presburger Arithmetic formulas. Since the Presburger Arithmetic has a decidable and tractable validity problem, this makes the type checking decidable in our typestate system. 

\begin{theorem}[Reduction to PAF]
\label{reduction}
For any general typing relation $(\Phi, \Gamma) \vdash t : (\Phi', \tau)$ in our typestate system, $\exists. \psi \in Presburger Arithmetic Formula$, such that $(\Phi, \Gamma) \vdash t : (\Phi', \tau)$ holds iff $\psi$ is satisfiable.
\end{theorem}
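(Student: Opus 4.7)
The plan is to proceed by structural induction on the derivation of $(\Phi, \Gamma) \vdash t : (\Phi', \tau)$, defining in parallel a translation $\psi(\cdot)$ that maps each typing judgment to a Presburger arithmetic formula whose satisfiability is equivalent to the validity of that judgment. Since the dependent base-term domain $\Phi$ is by construction a fragment of Presburger arithmetic, the state component $s$ ranges over a finite set, and the permission component is discrete, every static obligation generated by the rules in Figures~\ref{fig:typeformation},~\ref{fig:expressiontyping},~\ref{fig:decltyping}, and~\ref{fig:subtyping} can be encoded directly without leaving Presburger. The decidability of satisfiability and validity for Presburger arithmetic then yields decidability of the typechecking problem as an immediate corollary.

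For the base cases I would handle T-var, T-new, and constant expressions by taking $\psi$ to be $\top$ (or the incoming $\Phi$ itself), and T-new-Dep and T-DepFam-I by setting $\psi$ to be the injected dependent term $\phi_1$ conjoined with $\Phi$. The compositional cases then assemble $\psi$ from the sub-derivations: T-let, T-fref, T-update, T-case, T-s\,Decl, T-m\,Decl, and T-Eq translate to conjunctions of the premises' formulas (with appropriate variable renaming to model the sequential extension of $\Gamma$); T-match translates to $\psi(e_1) \wedge \bigvee_i \psi(e_i)$, faithfully mirroring $\Phi_u = \bigvee \Phi_i$; T-mcall conjoins the caller's $\Phi_1$, the instantiated pre-condition from \textit{mtype}$(m,S_b)$, the parameter subtyping formulas $\psi(\tau_i <: T_i)$, and the formula for the body $e_m$. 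For the subtyping relation, T-Sub-Refl, T-Sub-Trans, T-Sub-State, and T-Sub-Str yield finite, decidable side conditions, while T-Sub-DepTerm and T-DepFam\,Sub reduce precisely to Presburger implications $\phi_1 \Rightarrow \phi_2$, and T-DepFam-C-Eq reduces to the Presburger equality $m_1 = m_2$. The translation for T-while takes the programmer-supplied invariant $\phi$ and produces the conjunction of the three entailments $\Phi_1 \Rightarrow \exists.\phi$, $\Phi_2 \Rightarrow \exists.\phi$, and $(\Phi_1 \wedge \Phi_2 \wedge (e_1 = \mathit{false})) \Rightarrow \exists.\phi$, each of which is a closed Presburger sentence.

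Having defined $\psi$, the induction step splits into two implications. For soundness, I would show that if $(\Phi, \Gamma) \vdash t : (\Phi', \tau)$ then the premise derivations yield satisfiable $\psi_i$ by the inductive hypothesis, and the constructor for $\psi$ preserves satisfiability since it only uses conjunction, disjunction, and implication of Presburger formulas. For completeness, I would show conversely that any satisfying assignment for $\psi$ witnesses a valid derivation, essentially by reading off the rule applications from the structure of $\psi$ itself. The main obstacle I anticipate is a clean treatment of T-while: the three entailments look innocuous but must be shown to terminate without any fixpoint computation over possibly infinite counter-machine traces. This is only sound because the invariant is explicit in the syntax (as emphasised in Section~\ref{analysis}), so the loop body's contribution is bounded by the Presburger size of $\phi$ rather than by iterated state exploration. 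A secondary subtlety is that T-mcall must correctly combine the callee's annotated pre-condition with the caller's accumulated $\Phi$ under the substitution for actual parameters; since substitution of Presburger terms into Presburger formulas remains in Presburger, this is formally routine but notationally heavy, and I would isolate it as a small substitution lemma before running the main induction.
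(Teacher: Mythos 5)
Your proposal is correct in substance and follows the same underlying strategy as the paper: an inductive construction of a formula $\psi$ whose satisfiability is equivalent to the judgment, relying on the closure of Presburger arithmetic under Boolean connectives and substitution and on the finiteness of the state and permission components. The difference is in the decomposition. The paper's proof (Appendix, Theorem~\ref{reduction}) defines the translation at the level of \emph{types}: it gives $\psi(\tau)$ for primary types, state types $S$, subtyping $\tau_i <: \tau_j$ (as an entailment $\phi_{\tau_i} \vDash \phi_{\tau_j}$), type equality (as mutual subtyping), and arrow/transition types via the annotated method declarations, and leaves the lifting to full judgments implicit. You instead translate \emph{judgments} rule by rule over the derivation of $(\Phi,\Gamma)\vdash t:(\Phi',\tau)$, which forces you to say explicitly how T-match's $\bigvee\Phi_i$, T-mcall's precondition instantiation, and T-while's three invariant entailments enter $\psi$; this is strictly more informative than the paper's sketch (which never touches the expression rules or the while case) and your substitution lemma for T-mcall is exactly the kind of bookkeeping the paper glosses over. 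One caution: your completeness direction, phrased as ``reading off the rule applications from a satisfying assignment,'' does not quite work as stated, because the non-arithmetic side conditions (membership of a state declaration in $ST$, presence of a field in $\bar{fs}$, permission equality in T-Sub-Str) are not visible in any assignment; you should instead fold these decidable syntactic checks into the construction of $\psi$ itself (emitting $\bot$ when they fail), after which the equivalence holds by construction. The paper's own proof silently assumes this and proves neither direction of the iff, so your version, with that small repair, is the more complete argument.
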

\begin{proof}
The proof is using an inductive argument on the typing derivations for formation, well formedness and subtyping in our typestate system. The routine $\psi(\tau)$ defines the presburger formula for $\tau$.
We consider here only the base types and other complex types and show the PAF $\psi$ for each of these. 
\begin{itemize}
 \item Base case : $\forall$ primary type $\tau \in \{ void, int, bool, String \}$, $\psi(\tau)$ = $\phi_{\tau}$ = $\exists x_{\tau}. x_{\tau} \neq 0$.
  \item Case :: $\tau = S$, let $x_s$ define a variable for the state S, then the formula $\psi(\tau) = x_s \neq 0$.
  \item Case :: $\tau_i <: \tau_j$, by IH let $\psi(\tau_i)$ = $\phi_{\tau_i}$ and $\psi(\tau_j)$ = $\phi_{\tau_j}$, then $\psi(\tau_i <: \tau_j)$ = $\phi_{\tau_i} \vDash \phi_{\tau_j}$.
  \item Case :: $\tau_i = \tau_j$, by IH let $\psi(\tau_i)$ = $\phi_{\tau_i}$ and $\psi(\tau_j)$ = $\phi_{\tau_j}$, then $\psi(\tau_i = \tau_j)$ = $\psi(\tau_i <: \tau_j) \wedge \psi(\tau_j <: \tau_i)$ .
  \item Case :: $\tau_i \rightarrow \tau_j$, By expression typing rules, $\exists. mdecl = \tau_j m ( \tau_i \ a_i) \{... e_b : \tau_b ... \}$. By IH let $\psi(\tau_i)$ = $\phi_{\tau_i}$ and $\psi(\tau_j)$ = $\phi_{\tau_j}$ and $\psi(\tau_b) = \phi_{\tau_b} $, then $\psi(\tau_i \rightarrow \tau_j)$ = $\psi(\psi((\tau_i) \wedge \psi(\tau_b)) <: \psi(\tau_j))$.
  \item Case :: $\tau_i \gg \tau_j$, the case is similar to the $\tau_i \rightarrow \tau_j$ above.

\end{itemize}

\end{proof}


\subsection{Train Speed-Control Protocol}
he train speed control algorithm controls the speed of the train and guarantees the collision free running of the trains. A train could be in one of the four states \textit{viz. ontime, braking , late or stopped}. Thus a safety property for such a control system could be defined as - \textit{``the train is never late (or early) by more than 20 seconds''}. The speed control system is regulated via counters keeping track of number of beacons \textit{b} passed on the rails and a global clock ticks \textit{s}, besides this there is another counter which starts in the braking state and counts the ticks during breaking state \textit{d}. Each state is defined as - The train is ontime iff $ s-9 < b < s+9$, its late iff $ b \in [s-9, s-1]$, its early iff $b \geq s+9$ finally, when $b=s+1$, the train is on time again.

One property of interest to avoid collisions is- $\forall time, \mid b - s \mid \leq 20$,
which could not be enforced using regular typestate. We present a counter machine for the train speed control protocol in appendix section figure ~\ref{fig:train-speed-control-fig}. 
\begin{figure*}
\begin{tikzpicture}[->,>=stealth',shorten >=1pt,auto,node distance=3cm,
                    semithick]
  \tikzstyle{every state}=[fill=white,draw=black,text=black]

  \node[state, initial] (A)                    {$time$};
  \node[state] 	 (B)  at (6,0)    {$brake$};
  \node[state] 	 (C)  at (6,-4)    {$stop$};
  \node[state] 	 (D)  at (0, -4)    {$late$};

  \path (A) edge [bend left]  node {$G(b=s+9, b'=b+1 \wedge d'=0)$} (B)
        
	(B) edge [bend left]  node [above] {$G(b=s+1, s'=s+1 \wedge d'=0)$} (A)
	(A) edge [loop above]  node [left] {\parbox{3cm}{$G(b<s+9, b'=b+1) \vee G(b>s-9, s'=s+1)$}} (A)
	(B) edge [loop above]  node [right] {\parbox{3cm}{$G(d<9, d'=d+1 \wedge b'=b+1) \vee G(b>s+1, s'=s+1$}} (B)

	(B) edge  node [right] {$G(d=9, b'=b+1)$} (C)
        
	(C) edge [loop below]  node [below] {$G(b>s+1, s'=s+1)$} (C)

	(C) edge   node [right] {\parbox{3cm}{$G(b=s+1, s'=s+1 \wedge d'=0)$}} (A)
	
	(A) edge [bend left]  node [above] {\parbox{3cm}{$G(b=s-9, s'=s+1)$}} (D)
	
	(D) edge [bend left]  node [left] {\parbox{3cm}{$G(b=s-1, b'=b+1)$}} (A)

	(D) edge [loop below]  node [below] {$G(b<s-1, b'=b+1)$} (D);

\end{tikzpicture}
\caption{Counter machine for train speed control system, property $\mid b - s \mid \leq 20$ }
\label{fig:train-speed-control-fig}
\end{figure*}
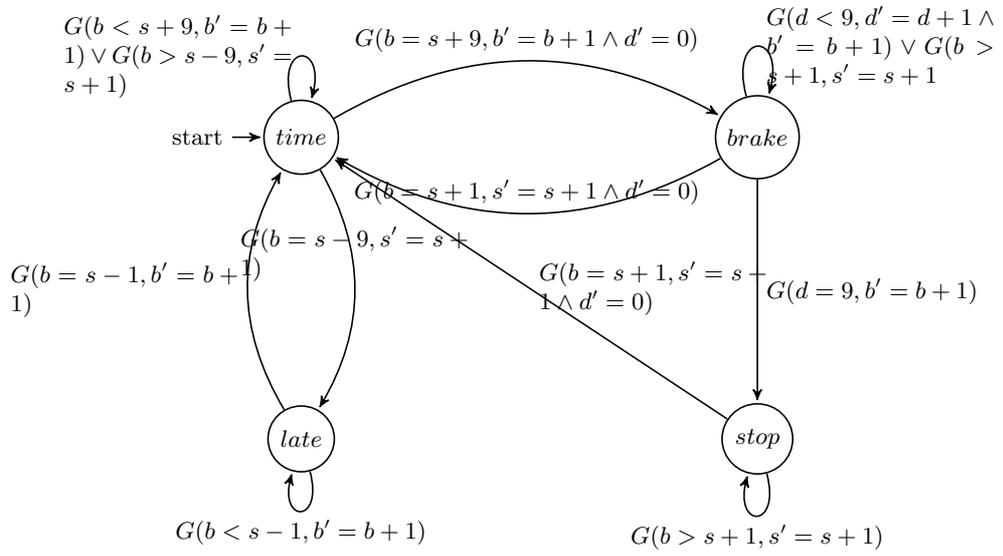

 \end{document}